\title{Mixing Time Matters: Accelerating Effective Resistance Estimation via Bidirectional Method \\ {\Large \bfseries [Technical Report]}}
\author[1]{Guanyu Cui}
\author[2]{Hanzhi Wang}
\author[1]{Zhewei Wei\thanks{Zhewei Wei is the corresponding author.}}
\affil[1]{Renmin University of China, China}
\affil[2]{BARC, University of Copenhagen, Denmark}
\date{}
\setlist[itemize]{leftmargin=*}
\theoremstyle{plain}
\newtheorem{theorem}{Theorem}[section]
\newtheorem{lemma}[theorem]{Lemma}
\theoremstyle{definition}
\newtheorem{definition}[theorem]{Definition}
\theoremstyle{remark}
\newcommand\bs[1]{\boldsymbol{#1}}
\renewcommand{\Pr}[1]{\mathrm{Pr}\left(#1\right)}
\newcommand{\E}[1]{\mathbf{E}\left[#1\right]}
\newcommand{\Var}[1]{\mathbf{Var}\left[#1\right]}
\newcommand{\Given}{\;\Big|\;}
\newcommand{\A}{\mathbf{A}}
\newcommand{\D}{\mathbf{D}}
\newcommand{\dmax}{d_{\max}}
\newcommand{\dmin}{d_{\min}}
\newcommand{\dbar}{\bar{d}}
\renewcommand{\P}{\mathbf{P}}
\renewcommand{\L}{\mathbf{L}}
\newcommand{\Lmax}{L_{\max}}
\newcommand{\q}{\bs{q}}
\renewcommand{\r}{\bs{r}}
\newcommand{\mathdefault}[1][]{}
\def\header{\noindent}
\begin{document}

\maketitle

\begin{abstract}
We study the problem of efficiently approximating the \textit{effective resistance} (ER) on undirected graphs, where ER is a widely used node proximity measure with applications in graph spectral sparsification, multi-class graph clustering, network robustness analysis, graph machine learning, and more. Specifically, given any nodes $s$ and $t$ in an undirected graph $G$, we aim to efficiently estimate the ER value $R(s,t)$ between nodes $s$ and $t$, ensuring a small absolute error $\epsilon$. The previous best algorithm for this problem has a worst-case computational complexity of $\tilde{O}\left(\frac{L_{\max}^3}{\epsilon^2 d^2}\right)$, where the value of $L_{\max}$ depends on the mixing time of random walks on $G$, $d = \min\{d(s), d(t)\}$, and $d(s)$, $d(t)$ denote the degrees of nodes $s$ and $t$, respectively. We improve this complexity to $\tilde{O}\left(\min\left\{\frac{L_{\max}^{7/3}}{\epsilon^{2/3}}, \frac{L_{\max}^3}{\epsilon^2d^2}, mL_{\max}\right\}\right)$, achieving a theoretical improvement of $\tilde{O}\left(\max\left\{\frac{L_{\max}^{2/3}}{\epsilon^{4/3} d^2}, 1, \frac{L_{\max}^2}{\epsilon^2 d^2 m}\right\}\right)$ over previous results. Here, $m$ denotes the number of edges. Given that $L_{\max}$ is often very large in real-world networks (e.g., $L_{\max} > 10^4$), our improvement on $L_{\max}$ is significant, especially for real-world networks. We also conduct extensive experiments on real-world and synthetic graph datasets to empirically demonstrate the superiority of our method. The experimental results show that our method achieves a $10\times$ to $1000\times$ speedup in running time while maintaining the same absolute error compared to baseline methods. 
\end{abstract}
\section{Introduction}
Effective Resistance (ER) is a widely adopted measure for node proximity in graphs, with applications across various scenarios. 
For example, ER approximation is closely related to optimal transport problems~\cite{robertson2024all}, the detection of low conductance sets in graph clustering~\cite{alev2018graph}, the maximum flow problem~\cite{christiano2011electrical}, and graph spectral sparsification~\cite{spielman2008graph}.
Additionally, the ER metric is utilized to enhance seeding strategies in influence maximization~\cite{hong2024new} and to assess network robustness against disruptions~\cite{yamashita2021effective}. 
Moreover, with the growing prominence of graph neural networks (GNNs), ER is increasingly employed in the graph rewiring process of GNNs to address the oversquashing problem~\cite{black2023understanding} and to improve the expressive power of GNN models~\cite{zhang2023rethinking}. 

Recognizing these widespread applications, the problem of efficiently estimating ER scores between a given pair of nodes has been the subject of extensive research~\cite{peng2021local, yang2023efficient, liao2023efficient}. 
Specifically, considering an undirected graph $G$ with two arbitrary nodes $s$ and $t$ in $G$, the ER score between $s$ and $t$, denoted by $R(s, t)$, is defined as
\begin{equation}
\label{eq:ER_def}
R(s, t) = (\bs{e}_s - \bs{e}_t)^\top\L^+(\bs{e}_s - \bs{e}_t) = \L^+_{ss} - \L^+_{st} - \L^+_{ts} + \L^+_{tt}, 
\end{equation}
where $\L^+$ denotes the Moore-Penrose pseudo-inverse of the graph $G$'s Laplacian matrix $\L$, and $\bs{e}_s$ denotes a one-hot vector with the $s$-th element being $1$. 

To understand this definition formula more intuitively, Peng et al.~\cite{peng2021local} provide a probabilistic interpretation of ER scores, showing that Equation~\eqref{eq:ER_def} can be rewritten as 
\begin{equation}
\label{eq:relation_transition_ER}
    R(s, t) = \displaystyle\sum\limits_{\ell = 0}^{\infty} \left(\dfrac{p^{(\ell)}(s, s)}{d(s)} - \dfrac{p^{(\ell)}(s, t)}{d(t)} - \dfrac{p^{(\ell)}(t, s)}{d(s)} + \dfrac{p^{(\ell)}(t, t)}{d(t)}\right), 
\end{equation}
where $p^{(\ell)}(s, t)$ denotes the transition probability of a random walk on $G$ moving from node $s$ to $t$ in its $\ell$-th step, and $d(s)$ corresponds to the degree of node $s$ in $G$. 
In other words, the ER score between nodes $s$ and $t$ equals the sum of bidirectional degree-normalized random walk probabilities. 
This interpretation has inspired a line of research~\cite{peng2021local, yang2023efficient} focused on estimating ER scores by leveraging techniques for computing random walk transition probabilities on graphs. 
This approach involves setting a maximum length $\Lmax$ for random walks, estimating the $\ell$-hop transition probabilities between nodes $s$ and $t$ for each $\ell \in [0, \Lmax]$ (i.e., estimating $p^{(\ell)}(s,t)$, $p^{(\ell)}(t,s)$, $p^{(\ell)}(s,s)$, and $p^{(\ell)}(t,t)$), then calculating the $\Lmax$-truncated ER, denoted as $R_{\Lmax}(s, t)$, as the estimate for $R(s,t)$. 
The definition of $R_{\Lmax}(s, t)$ is provided below:
\begin{equation}
\label{eq:L-truncated}
    R_{\Lmax}(s, t) = \displaystyle\sum\limits_{\ell = 0}^{\Lmax} \left(\dfrac{p^{(\ell)}(s, s)}{d(s)} - \dfrac{p^{(\ell)}(s, t)}{d(t)} - \dfrac{p^{(\ell)}(t, s)}{d(s)} + \dfrac{p^{(\ell)}(t, t)}{d(t)}\right). 
\end{equation}

Notably, due to the mixing time of random walks on graphs, the $\ell$-hop transition probability converges after several steps. 
Consequently, as shown in Equation~\eqref{eq:L-truncated}, this line of research~\cite{peng2021local, yang2023efficient} truncates the infinite summation in Equation~\eqref{eq:relation_transition_ER} at a level of $\Lmax$ and focuses on approximating the $\ell$-hop transition probabilities for $\ell \le \Lmax$. 
As a result, to estimate the ER score for a given pair of nodes $s$ and $t$ while assuring an additive error $\epsilon$ and a constant failure probability, the best-known methods~\cite{yang2023efficient}, AMC and GEER, both achieve a worst-case computational complexity of $\tilde{O}\left(\frac{\Lmax^3}{\epsilon^2 d^2}\right)$, where $d=\min\left\{d(s), d(t)\right\}$. 

{\header \bf Our Motivations.}
As proved in~\cite{peng2021local}, the value of $\Lmax$ depends on the additive error $\epsilon$ specified in the approximation problem and the spectral radius of the graph. 
Although previous studies~\cite{peng2021local, yang2023efficient} claim that the \textit{order} of $\Lmax$ for real-world graphs is $O(\log n)$, the actual \textit{value} of $\Lmax$ can still be very large. 
As reported in Table~\ref{tab:L-vs-eps}, the value of $\Lmax$ always exceeds $10^3$ or even $10^4$. 
This large value of $\Lmax$ makes the $O(\Lmax^3)$ time-dependence in the previously best complexity result~\cite{yang2023efficient} infeasible on real-world large graphs. 
To address this challenge, existing studies often set it to a much smaller value than its true value in experiments. 
For example, on the Facebook dataset with an additive error $\epsilon = \texttt{1e-3}$, the actual value of $\Lmax$ is $14546$ as given in Table~\ref{tab:L-vs-eps}, while previous works set $\Lmax$ to $74$ in ~\cite{peng2021local} and $96$ in ~\cite{yang2023efficient} in experiments~\footnote{
The value of $\Lmax$ depends on the spectral radius $\lambda$ of $G$ and the additive error $\epsilon$ specified in the approximation problem. For the Facebook dataset, the true spectral radius is $\lambda = 0.9992$, which results in $\Lmax = 14,546$ with $\epsilon = 10^{-3}$. 
In contrast, previous work~\cite{peng2021local} and~\cite{yang2023efficient} used $\lambda = 0.9$ and $\lambda = 0.9192$, respectively, leading to much smaller values of $\Lmax$, specifically $74$ and $96$.}. 
However, it is important to note that the values of $R_{\Lmax}(s,t)$ with different $\Lmax$ values vary significantly. 
Table \ref{tab:truncated-vs-Lmax} lists the value of $R_{\Lmax}(s,t)$ for different $\Lmax$. 
We observe that $R_{\Lmax}(s,t)$ only converges to within an additive error of $\texttt{1e-3}$ after $\Lmax\ge 2,000$. 
In other words, to estimate the ER score $R(s,t)$ with an additive error of $\texttt{1e-3}$, $\Lmax$ must be at least $2,000$. 
Using smaller values of $\Lmax$ in experiments, as done in previous methods~\cite{peng2021local, yang2023efficient}, can lead to unsatisfactory approximation accuracy. 
This highlights the urgent need to improve the time dependence on $\Lmax$ for ER approximation.
\begin{table}[t]
    \caption{$\Lmax$ across various real-world datasets, assuming $\epsilon = \texttt{1e-3}$ and $d(s) = d(t) = \left\lceil\bar{d}\right\rceil$.}
    \label{tab:L-vs-eps}
    \centering
    \begin{tabular}{cccccc}
        \toprule
        Facebook & DBLP & Youtube & Orkut & LiveJournal & Friendster \\
        \midrule
        14,546 & 4,536 & 6,353 & 1,767 & 146,133 & 23,734 \\
        \bottomrule
    \end{tabular}
\end{table}

\begin{table}[t]
    \centering
    \caption{$\Lmax$-truncated ER values (with 4 significant figures) versus $\Lmax$ on the Facebook dataset.}
    \label{tab:truncated-vs-Lmax}
    \begin{tabular}{ccccccc}
        \toprule
        $\Lmax$ & 100 & 1,000 & 2,000 & 3,000 & 4,000 & 5,000 \\
        \midrule
        {$R_{\Lmax}(s,t)$}& $0.1517$ & $0.1791$ & $0.1809$ & $0.1814$ & $0.1815$ & $0.1816$ \\
        \bottomrule
    \end{tabular}
\end{table}

\subsection{Our Contributions}
To address the aforementioned challenges, we present several contributions in this paper. 

First, we consider the problem of estimating the ER score $R(s,t)$ for an arbitrary pair of nodes $s$ and $t$ in an undirected graph $G$. 
We improve the worst-case computational complexity from the previous best result of $\tilde{O}\left(\frac{\Lmax^3}{\epsilon^2 d^2}\right)$ to $\tilde{O}\left(\min\left\{\frac{\Lmax^{7/3}}{\epsilon^{2/3}}, \frac{\Lmax^3}{\epsilon^2 d^2}, m\Lmax\right\}\right)$, where $d = \min\{d(s), d(t)\}$ and $m$ denotes the number of edges in $G$. 
Our result is asymptotically better than previous results, achieving a factor of $\tilde{O}\left(\max\left\{\frac{\Lmax^{2/3}}{\epsilon^{4/3} d^2}, 1, \frac{\Lmax^2}{\epsilon^2 d^2 m}\right\}\right)$ improvements. 
Notably, in real-world networks where node degrees often follow power-law distributions—with most nodes having small degrees—our theoretical improvements are further enhanced due to the typically small values of $d = \min\{d(s), d(t)\}$. 

Moreover, we conduct extensive experiments on real-world and synthetic large graphs to show the superiority of our algorithm. 
Experimental results show that our algorithm outperforms all baseline methods by up to an order of magnitude in both efficiency and accuracy. 

In particular, we hope to draw attention to the time dependence on $\Lmax$ in ER approximation. 
We emphasize that $\Lmax$ is influenced by the mixing time of random walks and the spectral radius of the graph, and the true value of $\Lmax$ for real-world networks can be very large. 
Therefore, improving the dependence on $\Lmax$ has substantial theoretical and practical implications.
\section{Preliminaries}
\label{sec:preliminaries}

\header{\bf Notations.}
We consider an undirected connected graph $G = (V, E)$ with $n$ nodes and $m$ edges. 
The neighborhood of a node $i$, denoted as $N(i)$, consists of nodes that share an edge with it, i.e., $N(i) = \{j: \{i, j\}\in E\}$.
The degree of a node $i$, denoted as $d(i)$, is the number of nodes in its neighborhood.
We use $\dmax$ to represent the maximum degree, $\dmin$ for the minimum degree, and $\dbar$ for the average degree.
The adjacency matrix $\A$ is defined as $\A_{ij} = \mathbf{I}[(i, j) \in E]$.
The degree matrix $\D$ is a diagonal matrix with $\D_{ii} = d(i)$.
The Laplacian matrix $\L$ is defined as $\D - \A$.
Additionally, the Moore-Penrose pseudo-inverse of $\L$ is denoted as $\L^+$.
Moreover, $\bs{e}_s$ denotes the one-hot vector with the $s$-th element being $1$. 
Table \ref{tab:notations} summarizes frequently used notations for reference.
\begin{table}[t]
    \caption{Frequently used notations.}
    \label{tab:notations}
    \centering
    \begin{tabular}{l|l}
        \toprule
        \textbf{Notation} & \textbf{Description} \\
        \midrule
        $G = (V, E)$ & undirected graph with node set $V$ and edge set $E$. \\
        $n$, $m$ & number of nodes and edges. \\
        $N(i)$ & neighborhood of node $i$. \\
        $d(i)$ & degree of node $i$. \\
        $\dmax$, $\dmin$, $\dbar$ & maximum, minimum, and average degree. \\
        $\A$ & adjacency matrix of $G$. \\
        $\D$ & degree matrix of $G$. \\
        $\L = \D - \A$ & Laplacian matrix of $G$. \\
        $\L^+$ & Moore-Penrose inverse of $\L$. \\
        $\bs{e}_s$ & one-hot vector whose $s$-th element is $1$. \\
        $\P = \A\D^{-1}$ & transition matrix of $G$. \\
        $\lambda_i$ & $i$-th largest eigenvalue of $\P$. \\
        $p^{(\ell)}(s, t)$ & $\ell$-hop transition probability from $s$ to $t$. \\
        $h(s, t)$ & hitting time from $s$ to $t$. \\
        $\kappa(s, t)$ & commute time between $s$ and $t$. \\
        $R(s, t)$ & effective resistance between $s$ and $t$. \\
        $\Lmax$ & maximum random walk length (see Lemma \ref{lem:L}). \\
        $\epsilon$, $p_f$ & absolute error and the failure probability. \\
        $O(\cdot)$ & big-Oh notation, asymptotic upper bound. \\ 
        $\tilde{O}(\cdot)$ & soft big-Oh notation, ignoring some log factors. \\
        \bottomrule
    \end{tabular}
\end{table}

\header{\bf Problem Definition. } 
In this paper, we consider the problem of estimating Single-Pair Effective Resistance (SPER), as defined below. 
\begin{definition}[SPER Estimation with Absolute Error Guarantee]
Given an undirected connected graph $G = (V, E)$, an arbitrary pair of nodes $s, t\in V$, an absolute error tolerance $\epsilon > 0$, and a failure probability $0 < p_f\le 1$, the objective of a SPER query with an absolute error guarantee is to provide an estimator $\hat{R}(s, t)$ such that $$\Pr{\left|\hat{R}(s, t) - R(s, t)\right| < \epsilon} \ge 1 - p_f.$$
\end{definition}

\subsection{Probabilistic Interpretation}
The ER score can be interpreted as transition probabilities of random walks on graphs. 
Specifically, the random walk on graphs is an important stochastic process. 
It operates as follows: starting from a node $v_0$, at each time step, when at node $v_t$, the next step involves moving to any neighboring node of $v_t$ with a probability of ${1}/{d(v_t)}$.
This random walk can also be viewed as a Markov chain with transition matrix $\P = \A\D^{-1}$.
Let $1 = \lambda_1 \ge \lambda_2 \ge \cdots \ge \lambda_{n} \ge - 1$ denote the eigenvalues of $\P$ sorted in descending order. 
The $\ell$-hop transition probability $p^{(\ell)}(s, t)$ represents the probability that a random walk starting at node $s$ visits node $t$ after $\ell$ hops. 
It's noteworthy that the $\ell$-hop transition probability matrix $\P^{(\ell)} = (p^{(\ell)}(i, j))_{ij}$ is equivalent to the $\ell$-th power of $\P$, i.e., $\P^{(\ell)} = \P^{\ell}$.
We will also highlight a well-known symmetric property concerning the multi-step transition probabilities (MSTP) on undirected graphs as follows:
\begin{lemma}[Symmetric Property of MSTP, \cite{lofgren2015bidirectional}]
\label{lem:symmetry}
    Given any undirected graph $G = (V, E)$, for any node $s$ and $t$, and for all $\ell$, the following property of the $\ell$-hop transition probabilities holds:
    $$\dfrac{p^{(\ell)}(s, t)}{d(t)} = \dfrac{p^{(\ell)}(t, s)}{d(s)}.$$
\end{lemma}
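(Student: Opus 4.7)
The plan is to recognize the identity as the statement that a certain matrix obtained from $\P$ by diagonal conjugation is symmetric, and to exhibit that symmetry via the standard reduction to the symmetric normalized adjacency $\mathcal{N} := \D^{-1/2}\A\D^{-1/2}$. Equivalently, one can prove it by induction on $\ell$ from the one-step detailed-balance identity. I would pursue the matrix route as the main proof and mention the induction as a backup, since both are short and illuminate the same underlying reason: reversibility of the simple random walk on an undirected graph.

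Concretely, I would first rewrite the claim as the assertion that the matrix $\mathbf{M}^{(\ell)}$ with entries $\mathbf{M}^{(\ell)}_{st} := p^{(\ell)}(s,t)/d(t)$ is symmetric. Taking the transition matrix in its row-stochastic form $\D^{-1}\A$ (so that $p^{(\ell)}(s,t) = \bigl((\D^{-1}\A)^\ell\bigr)_{st}$, which agrees with the paper's $\P^{\ell}$ under the symmetric identifications used there), the similarity $\D^{-1}\A = \D^{-1/2}\mathcal{N}\D^{1/2}$ telescopes to $(\D^{-1}\A)^\ell = \D^{-1/2}\mathcal{N}^\ell\D^{1/2}$, and multiplying on the right by $\D^{-1}$ yields
\[
\mathbf{M}^{(\ell)} = (\D^{-1}\A)^\ell\D^{-1} = \D^{-1/2}\mathcal{N}^\ell\D^{-1/2}.
\]
Since $\A = \A^\top$, the matrix $\mathcal{N}$ is symmetric, hence so is $\mathcal{N}^\ell$, and conjugation by the diagonal matrix $\D^{-1/2}$ on both sides preserves symmetry. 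Reading off the $(s,t)$ and $(t,s)$ entries of $\mathbf{M}^{(\ell)}$ gives the claim.

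As a sanity check / alternative, I would sketch the induction: for $\ell = 0$ both sides agree trivially, and for $\ell = 1$ the identity reduces to $\A_{st}/(d(s)d(t)) = \A_{ts}/(d(s)d(t))$. For the inductive step, Chapman--Kolmogorov gives $p^{(\ell+1)}(s,t) = \sum_u p^{(\ell)}(s,u)\,\A_{ut}/d(u)$; dividing by $d(t)$, applying the inductive hypothesis to replace $p^{(\ell)}(s,u)/d(u)$ by $p^{(\ell)}(u,s)/d(s)$, and using $\A_{ut} = \A_{tu}$ rearranges the sum into $p^{(\ell+1)}(t,s)/d(s)$. The only mild obstacle is a cosmetic one, namely keeping the row- vs.\ column-stochastic convention for $\P$ consistent with the indexing of $p^{(\ell)}(s,t)$; once that is fixed, both arguments are essentially one line each.
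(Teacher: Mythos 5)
Your proof is correct. Note that the paper does not actually prove this lemma at all---it imports it from the cited reference \cite{lofgren2015bidirectional} as a known fact---so there is no in-paper argument to compare against. Your main route (writing the degree-normalized $\ell$-step matrix as $\D^{-1/2}\mathcal{N}^{\ell}\D^{-1/2}$ with $\mathcal{N}=\D^{-1/2}\A\D^{-1/2}$ symmetric) is the standard reversibility argument and is sound, as is your inductive backup via Chapman--Kolmogorov. Your caution about the row- versus column-stochastic convention is well placed: the paper itself defines $\P=\A\D^{-1}$ while indexing $p^{(\ell)}(i,j)$ as if $\P$ were row-stochastic, so fixing a single convention, as you do, is exactly the right bookkeeping; once fixed, both of your arguments go through without further issue.
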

The hitting time from $s$ to $t$, denoted as $h(s, t)$, indicates the expected steps for a random walk starting from node $s$ to hit node $t$ for the first time. 
The commute time between $s$ and $t$, denoted as $\kappa(s, t)$, is defined as $\kappa(s, t) = h(s, t) + h(t, s)$.

In particular, Peng et al.~\cite{peng2021local} provide Equation~\eqref{eq:relation_transition_ER} to illustrate the connection between the Moore-Penrose pseudo-inverse of the Laplacian matrix $\L^+$ and the MSTP matrix $\P^{\ell}$, given that $(\bs{e}_s - \bs{e}_t)^\top\L^+(\bs{e}_s - \bs{e}_t) = \sum\limits_{\ell = 0}^{\infty}(\bs{e}_s - \bs{e}_t)^\top\D^{-1}\P^{\ell}(\bs{e}_s - \bs{e}_t)$.
A line of research~\cite{peng2021local, yang2023efficient}
use the $\Lmax$-truncated ER $R_{\Lmax}(s, t)$ as defined in Equation~\eqref{eq:L-truncated} to compute the estimate for $R(s,t)$.

Additionally, we need to bound the truncation error to satisfy the absolute error constraint.
Peng et al. have proved such a bound in \cite{peng2021local}, which is later refined by Yang et al. in \cite{yang2023efficient}.
The bound from \cite{yang2023efficient} is listed below:
\begin{lemma}[Maximum Steps Needed, \cite{yang2023efficient}]
\label{lem:L}
    Given an undirected graph $G$, $\left|R(s, t) - R_L(s, t)\right| \le \frac{\epsilon}{2}$ holds for $s$ and $t$ when $L$ satisfies
    \begin{equation}
    \label{eq:L}
        L \ge \Lmax = \left\lceil\log_{{1}/{\lambda}}\dfrac{2\left({1}/{d(s)} + {1}/{d(t)}\right)}{\epsilon(1 - \lambda)}\right\rceil,
    \end{equation}
    where $\lambda = \max\{\lambda_2, |\lambda_n|\}$.
\end{lemma}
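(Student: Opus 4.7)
The plan is to write the truncation error $R(s,t) - R_L(s,t)$ as a matrix quadratic form in $\bs{v} := \bs{e}_s - \bs{e}_t$ and bound it via spectral decomposition. Equations~\eqref{eq:relation_transition_ER} and \eqref{eq:L-truncated}, together with the identity $(\bs{e}_s-\bs{e}_t)^\top\L^+(\bs{e}_s-\bs{e}_t) = \sum_{\ell=0}^{\infty}\bs{v}^\top\D^{-1}\P^\ell\bs{v}$ already recorded in the preliminaries, give
$$R(s,t) - R_L(s,t) = \sum_{\ell=L+1}^{\infty} \bs{v}^\top \D^{-1}\P^\ell \bs{v}.$$
The first step is to symmetrize: writing $\hat{\P} := \D^{-1/2}\A\D^{-1/2}$, one checks $\D^{-1}\P^\ell = \D^{-1/2}\hat{\P}^\ell\D^{-1/2}$, which is symmetric and has the same eigenvalues $1 = \lambda_1 \ge \lambda_2 \ge \cdots \ge \lambda_n \ge -1$ as $\P$.

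Next I would diagonalize $\hat{\P} = \sum_{i=1}^{n}\lambda_i\bs{u}_i\bs{u}_i^\top$ and set $\bs{w} := \D^{-1/2}\bs{v}$, so that each summand becomes $\sum_i \lambda_i^\ell(\bs{u}_i^\top\bs{w})^2$. The Perron eigenvector of $\hat{\P}$ is $\bs{u}_1 = \D^{1/2}\bs{1}/\sqrt{2m}$, and a direct calculation
$$\bs{u}_1^\top\bs{w} = \frac{1}{\sqrt{2m}}\left(\sqrt{d(s)}\cdot\frac{1}{\sqrt{d(s)}} - \sqrt{d(t)}\cdot\frac{1}{\sqrt{d(t)}}\right) = 0$$
kills the $\lambda_1 = 1$ component, leaving only eigenvalues with $|\lambda_i|\le\lambda<1$; this is what makes the remaining tail geometrically convergent.

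Swapping the two sums and using $|\lambda_i|^{L+1}\le\lambda^{L+1}$ together with $|1-\lambda_i|\ge 1-\lambda$ for $i\ge 2$ (the latter holds even when $\lambda_i<0$, since then $1-\lambda_i\ge 1\ge 1-\lambda$), plus Parseval's identity $\sum_{i\ge 2}(\bs{u}_i^\top\bs{w})^2\le\|\bs{w}\|^2 = 1/d(s)+1/d(t)$, I obtain
$$|R(s,t) - R_L(s,t)| \le \frac{\lambda^{L+1}}{1-\lambda}\left(\frac{1}{d(s)} + \frac{1}{d(t)}\right) \le \frac{\lambda^L(1/d(s)+1/d(t))}{1-\lambda}.$$
Requiring this upper bound to be at most $\epsilon/2$ and taking $\log_{1/\lambda}$ on both sides yields exactly the threshold in \eqref{eq:L}.

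I do not anticipate a serious obstacle. The core of the argument is the vanishing of $\bs{u}_1^\top\bs{w}$, which is really just the orthogonality of $\bs{e}_s-\bs{e}_t$ to the stationary vector in the $\D^{-1}$-inner product; this is what removes the $\lambda_1=1$ mode and makes the tail summable. The only mild subtlety is handling possibly negative $\lambda_i$ when bounding the absolute value of the geometric tail, but $|1-\lambda_i|\ge 1-\lambda$ for all $i\ge 2$ regardless of sign takes care of it cleanly.
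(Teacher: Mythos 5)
Your proof is correct. The paper does not prove Lemma~\ref{lem:L} itself---it is imported from Yang et al.~\cite{yang2023efficient}---but your argument (symmetrize to $\D^{-1/2}\A\D^{-1/2}$, observe that $\D^{-1/2}(\bs{e}_s-\bs{e}_t)$ is orthogonal to the Perron eigenvector, then bound the geometric tail with $|\lambda_i|\le\lambda$ and Parseval) is exactly the standard spectral derivation used in that line of work, and every step checks out, including the final algebra recovering Equation~\eqref{eq:L}. The only implicit assumption worth flagging is that $\lambda<1$ (i.e., $G$ is connected and non-bipartite), without which the geometric series does not converge; this is tacitly assumed by the lemma statement itself.
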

Notably, there is a strong connection between $\Lmax$ and the mixing time $\tau_{\text{mix}}(\epsilon)$, a crucial property of graphs.
The mixing time of an undirected graph quantifies how quickly a random walk on the graph approaches the stationary distribution $\bs{\pi} = \frac{\bs{D1}}{2m}$. 
It can be formally defined as 
$$\tau_{\text{mix}}(\epsilon) = \min\left\{t: \max\limits_{x \in V}\left\|\bs{P}^t\bs{e}_x - \bs{\pi}\right\|_{\text{TV}} \le \epsilon \right\},$$ 
where $\left\|\bs{P}^t\bs{e}_x - \bs{\pi}\right\|_{\text{TV}} = \frac{1}{2}\sum\limits_{y \in V} \left|p^{(t)}(x, y) - \bs{\pi}(y)\right|$ is the total variation distance between two distributions.
It can be shown that the mixing time is bounded by $O\left(\frac{\log\left(\frac{m}{\dmin\epsilon}\right)}{1 - \lambda}\right)$.
Additionally, according to Equation \eqref{eq:L}, we have $\Lmax \le \left\lceil\log_{\frac{1}{\lambda}}\frac{4}{\dmin\epsilon(1 - \lambda)}\right\rceil$. 
Since for any $c > 0$, we also have $\lim\limits_{\lambda \to 1}{\frac{\log\left(A / (1 - \lambda)\right)}{\log(1 / \lambda)}} \Big/ \frac{A}{(1 - \lambda)^{1 + c}} = 0$, we can bound $\Lmax$ as $O\left(\frac{\log\left(\frac{1}{\dmin\epsilon}\right)}{(1 - \lambda)^{1 + c}}\right)$, which shows a strong connection between $\Lmax$ and the mixing time $\tau_{\text{mix}}(\epsilon)$.
In fact, in the proof of Lemma \ref{lem:L}~\cite{yang2023efficient}, the authors merely change the definition of ``close'' by replacing the total variation distance with the $L_{\infty}$ norm.

\subsection{Key Techniques}
According to the previous subsection, ER has a strong connection to MSTPs.
In this section, we introduce some key techniques for estimating MSTPs, including the forward push operation, Monte Carlo sampling, and bidirectional methods.
Notably, our method also utilizes some of these techniques.

{\header \bf Forward Push Operation.}
The forward push operation is a deterministic procedure to spread probability masses on graphs, originally proposed by Anderson et al. \cite{andersen2006local} for PageRank vector computation.
Here, we introduce it in the context of MSTP computation.

The core idea of the forward push operation, with respect to a node $s$, involves maintaining two types of vectors for each layer (number of steps) $0 \leq \ell \leq \Lmax$: reserve vectors $\q^{(\ell)}_s \in \mathbb{R}^{n}$ and residue vectors $\r^{(\ell)}_s \in \mathbb{R}^{n}$. 
The $u$-th element of a reserve vector $\q^{(\ell)}_s$ represents the accumulated probability mass on node $u$ at layer $\ell$ and serves as an underestimate of $p^{(\ell)}(s, u)$.
The $u$-th element of a residue vector $\r^{(\ell)}_s$ indicates the current active probability mass on node $u$ at layer $\ell$ that will be distributed to neighboring nodes in the next layer.

\begin{algorithm}[t]
\DontPrintSemicolon
    \caption{Forward-Push$_s(u, \ell)$}
    \label{alg:Forward-Push-MSTP}
    $\q_s^{(\ell)}(u) \gets \q_s^{(\ell)}(u) + \r_s^{(\ell)}(u)$ \\
    \For{$v\in N(u)$}
    {
        $\r_s^{(\ell + 1)}(v) \gets \r_s^{(\ell + 1)}(v) + \frac{\r_s^{(\ell)}(u)}{d(u)}$
    }
    $\r_s^{(\ell)}(u) \gets 0$
\end{algorithm}
The Forward-Push procedure, detailed in Algorithm \ref{alg:Forward-Push-MSTP}, operates as follows: invoking Forward-Push$_s(u, \ell)$ increases $\bs{q}_s^{(\ell)}(u)$ by $\bs{r}_s^{(\ell)}(u)$, then distributes $\bs{r}_s^{(\ell)}(u)$ evenly to its neighbors by increasing $\bs{r}_s^{(\ell + 1)}(v)$ by $\frac{\bs{r}_s^{(\ell)}(u)}{d(u)}$ for each $v \in N(u)$, and finally set $\bs{r}_s^{(\ell)}(u)$ to $0$.
We also present a running example in Figure \ref{fig:Forward-Push-Example} where node $u$ has three neighbors.

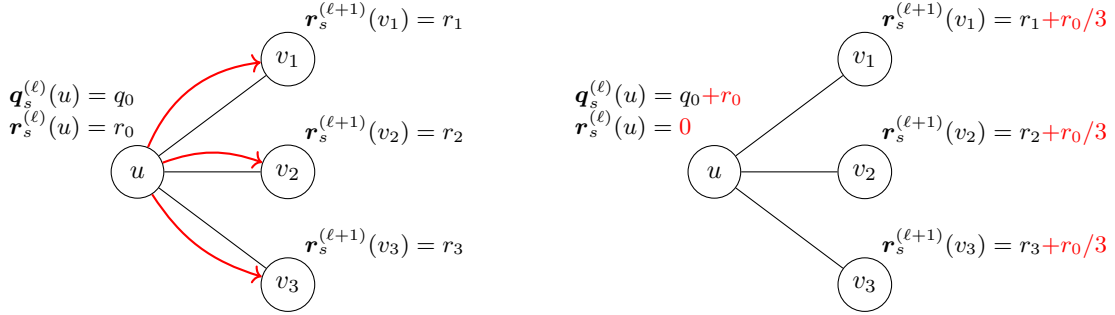
\begin{figure}[t]
    \caption{A running example of Forward-Push$_s(u, \ell)$.}
    \label{fig:Forward-Push-Example}
    \centering
    \begin{tikzpicture}[scale = 1, transform shape, every node/.style={minimum size = .7cm}]
        \node[circle, draw] (u) at (0, 0) {$u$};
        \node[anchor = south east, align = left] at ([shift = {(.1cm, .3cm)}]u){\small $\bs{q}_s^{(\ell)}(u) = q_0$ \\ \small $\bs{r}_s^{(\ell)}(u) = r_0$};
        \node[circle, draw] (v1) at (2., 1.5) {$v_1$};
        \node[anchor = south west] at ([shift = {(.1cm, .2cm)}]v1) {\small $\bs{r}_s^{(\ell + 1)}(v_1) = r_1$};
        \node[circle, draw] (v2) at (2., 0) {$v_2$};
        \node[anchor = south west] at ([shift = {(.1cm, .2cm)}]v2){\small $\bs{r}_s^{(\ell + 1)}(v_2) = r_2$};
        \node[circle, draw] (v3) at (2., -1.5) {$v_3$};
        \node[anchor = south west] at ([shift = {(.1cm, .2cm)}]v3){\small $\bs{r}_s^{(\ell + 1)}(v_3) = r_3$};
        \draw (u) -- (v1);
        \draw[red, ->, thick, bend left = 30] (u) to (v1);
        \draw (u) -- (v2);
        \draw[red, ->, thick, bend left = 20] (u) to (v2);
        \draw (u) -- (v3);
        \draw[red, ->, thick, bend right = 20] (u) to (v3);
    \end{tikzpicture}
    \hspace{1.0cm}
    \begin{tikzpicture}[scale = 1, transform shape, every node/.style={minimum size = .7cm}]
        \node[circle, draw] (u) at (0, 0) {$u$};
        \node[anchor = south east, align = left] at ([shift = {(.5cm, .3cm)}]u){\small $\bs{q}_s^{(\ell)}(u) = q_0 \color{red}{+ r_0}$ \\ \small $\bs{r}_s^{(\ell)}(u) = \color{red}{0}$};
        \node[circle, draw] (v1) at (2., 1.5) {$v_1$};
        \node[anchor = south west] at ([shift = {(.1cm, .2cm)}]v1){\small $\bs{r}_s^{(\ell + 1)}(v_1) = r_1\color{red}{+r_0/3}$};
        \node[circle, draw] (v2) at (2., 0) {$v_2$};
        \node[anchor = south west] at ([shift = {(.1cm, .2cm)}]v2){\small $\bs{r}_s^{(\ell + 1)}(v_2) = r_2\color{red}{+r_0/3}$};
        \node[circle, draw] (v3) at (2., -1.5) {$v_3$};
        \node[anchor = south west] at ([shift = {(.1cm, .2cm)}]v3){\small $\bs{r}_s^{(\ell + 1)}(v_3) = r_3\color{red}{+r_0/3}$};
        \draw (u) -- (v1);
        \draw (u) -- (v2);
        \draw (u) -- (v3);
    \end{tikzpicture}
\end{figure}

{\header \bf Monte Carlo Sampling.}
Monte Carlo sampling is a classic technique for estimating probabilistic quantities.
To estimate a quantity $r$, we design a random variable $X$ such that $\E{X} = r$ and $\Var{X} < \infty$.
By sampling $N$ independent instances $(X_i)_{i = 1}^N$, the empirical mean $\frac{1}{N}\sum\limits_{i = 1}^N X_i$ serves as an estimate for $r$.
In the context of SPER estimation, Monte Carlo sampling involves generating multiple $\Lmax$ random walks to estimate MSTPs, which then provides an estimator for $R(s, t)$.

{\header \bf Bidirectional Methods.}
Banerjee et al.~\cite{banerjee2015fast} introduce a bidirectional algorithm named Bidirectional-MSTP (BiMSTP for short) for approximating MSTPs; and similar techniques have also been applied to (Personalized) PageRank approximation~\cite{lofgren2015bidirectional, lofgren2016personalized, wang2024revisiting, yang2024efficient}. 
Banerjee et al.~\cite{banerjee2015fast} prove an invariant (Lemma \ref{lem:invariant}), and integrate the forward push operation with Monte Carlo sampling to estimate the $\ell$-hop transition probability $p^{(\ell)}(s, t)$.
\begin{lemma}[Invariant, \cite{banerjee2015fast}]
\label{lem:invariant}
If we initialize the reserve and residue vectors as $\q_s^{(\ell)} = \bs{0}$ for $\ell = 0, 1, \cdots, \Lmax$, and $\r_s^{(0)} = \bs{e}_s$, $\r_s^{(\ell)} = \bs{0}$ for $\ell = 1, 2, \cdots, \Lmax$, then after any sequence of Forward-Push$_s(u, \ell)$ operations, for any $0 \le \ell \le \Lmax$, the following invariant holds:
\begin{align*}
    p^{(\ell)}(s, t) = \q_s^{(\ell)}(t) + \sum_{k = 0}^{\ell}\sum_{v\in V}\r_s^{(\ell - k)}(v)p^{(k)}(v, t).
\end{align*}
\end{lemma}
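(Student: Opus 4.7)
My plan is to argue by induction on the number of Forward-Push operations performed so far. For the base case, no push has been executed, so all $\q_s^{(\ell)}$ vanish, $\r_s^{(0)} = \bs{e}_s$, and all other residue vectors are zero. Evaluating the right-hand side of the claimed identity at an arbitrary target layer $\ell$, only the index $k = \ell$ with $v = s$ contributes, and it gives $p^{(\ell)}(s, t)$. Hence the invariant holds before any push.

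For the inductive step, suppose the invariant holds immediately before a call to Forward-Push$_s(u, \ell_0)$, and write $r := \r_s^{(\ell_0)}(u)$ at that moment. The call performs three coordinate-level updates: $\q_s^{(\ell_0)}(u)$ gains $r$; $\r_s^{(\ell_0)}(u)$ drops to $0$; and each $\r_s^{(\ell_0+1)}(v)$ with $v \in N(u)$ gains $r/d(u)$. I would then fix an evaluation layer $\ell$ and show that the net change in the right-hand side of the identity is zero, splitting into three sub-cases according to the relation between $\ell$ and $\ell_0$.

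For $\ell < \ell_0$, every modified entry sits at layer $\ell_0$ or $\ell_0+1$, neither of which appears in the inner double sum (whose residue layers range over $0, 1, \dots, \ell$), so nothing changes. For $\ell = \ell_0$, the change in the $\q_s^{(\ell)}(t)$ term equals $r$ when $u = t$ and $0$ otherwise; in the residue sum only the $k = 0$ term is affected (via $\r_s^{(\ell_0)}(u)$), and since $p^{(0)}(v,t) = \mathbf{1}[v=t]$ the net change there is exactly $-r$ when $u = t$ and $0$ otherwise, cancelling the $\q$ contribution. For $\ell > \ell_0$, $\q_s^{(\ell)}(t)$ is untouched and the residue sum changes by
\[
-r \cdot p^{(\ell - \ell_0)}(u, t) + \frac{r}{d(u)} \sum_{v \in N(u)} p^{(\ell - \ell_0 - 1)}(v, t),
\]
which vanishes by the one-step Chapman--Kolmogorov identity $p^{(k)}(u,t) = \frac{1}{d(u)} \sum_{v \in N(u)} p^{(k-1)}(v,t)$ applied at $k = \ell - \ell_0$, i.e.\ the Markov property of the simple random walk on $G$.

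The main obstacle is less mathematical depth than careful bookkeeping: one has to keep the push target $u$ distinct from the invariant's evaluation point $t$, and has to match the residue updates at layers $\ell_0$ and $\ell_0+1$ with the appropriately shifted transition probabilities $p^{(\ell-\ell_0)}(\cdot,t)$ and $p^{(\ell-\ell_0-1)}(\cdot,t)$ inside the convolution. Once the three sub-cases are cleanly separated, the only non-trivial cancellation reduces to the one-step Markov property, and the induction closes.
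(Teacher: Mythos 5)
Your proof is correct. Note that the paper itself does not prove this lemma: it is imported verbatim from Banerjee et al.~\cite{banerjee2015fast}, so there is no in-paper argument to compare against. Your induction on the number of push operations is the standard (and essentially the original) route: the base case reduces the right-hand side to the single term $k=\ell$, $v=s$; the case split on the evaluation layer $\ell$ versus the push layer $\ell_0$ is handled correctly (in particular, for $\ell=\ell_0$ only the $k=0$ term of the convolution is touched, and for $\ell>\ell_0$ the cancellation is exactly the first-step decomposition $p^{(k)}(u,t)=\frac{1}{d(u)}\sum_{v\in N(u)}p^{(k-1)}(v,t)$). The argument is complete as stated.
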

\section{Related Works}
\label{sec:related-works}
Existing works that focus on SPER estimation can be categorized into four types: estimating multi-step transition probabilities, utilizing landmark nodes, estimating commute time, and solving the Laplacian system of equations.
In this section, we begin by introducing transition-probabilities-based methods, which are fast, have an absolute error guarantee, and are easy to implement. 
Following this, we provide a concise overview of other relevant lines of research.

\subsection{Transition-Probabilities-Based Methods}
In their work~\cite{peng2021local}, Peng et al. introduce the EstEff-TranProb algorithm, which employs random walk simulations to approximate the transition probabilities, subsequently deriving an estimator for $R(s, t)$ with a computational complexity of $\tilde{O}\left(\frac{\Lmax^4}{\epsilon^2}\right)$. 
Building on this, Yang et al. \cite{yang2023efficient} suggest an approach that adaptively samples random walks, applying Bernstein's inequality for early termination. 
This innovation led to the development of the AMC algorithm and an enhanced variant, GEER. 
Both algorithms achieve a time complexity of $\tilde{O}\left(\frac{\Lmax^3}{\epsilon^2 d^2}\right)$, where $d = \min\{d(s), d(t)\}$.

\subsection{Landmark-Based Methods}
In their recent study, Liao et al.~\cite{liao2023efficient} introduce a novel technique for computing the SPER, proposing four distinct algorithms that leverage the concept of a landmark node. 
Subsequently, Liao et al.~\cite{liao2024efficient} expand the algorithms by moving from the use of a single landmark node to incorporating a set of landmark nodes.
However, for all proposed algorithms except one, it is not possible to directly set the algorithm parameters to get an estimator with an absolute error guarantee, making these methods heuristic and lacking guaranteed error approximation.
For the exception LocalTree algorithm, it has a worst-case running time of $\tilde{O}\left(\frac{n^2(h(s, v) + h(t, v))}{\epsilon^2}\right)$, where $v$ is the landmark node. 
Additionally, the efficiency of all four algorithms is dependent on the choice of the landmark node, a decision that largely relies on heuristic approaches. 

\subsection{Commute-Time-Based Methods}
Peng et al. \cite{peng2021local} capitalized on the relation between the ER and the commute time, as detailed in \cite{lovasz1993random}, to develop their EstEff-MC algorithm. 
This algorithm provides an estimation of $R(s, t)$ achieving a relative error margin within $\epsilon$. 
The algorithm's worst-case expected running time is formulated as $\tilde{O}\left(\frac{m}{(1 - \lambda_2)^2d \epsilon^2}\right)$, showcasing a direct proportionality to $m$ and a quadratic dependence on $\frac{1}{\epsilon}$.

\subsection{Laplacian-Solver-Based Methods}
Per Equation \eqref{eq:ER_def}, computing $R(s, t)$ involves solving the linear system $\L \bs{x} = \bs{e}_s - \bs{e}_t$ to obtain $\L^+(\bs{e}_s - \bs{e}_t)$, and then subtracting the $t$-th element from its $s$-th element.
The advancements in \cite{spielman2004nearly, koutis2010approaching, koutis2011nearly, cohen2014solving, jambulapati2021ultrasparse} have achieved a nearly-linear time complexity. 
This enables the computation of an estimator for $R(s, t)$ within $\tilde{O}\left(m\log \frac{1}{\epsilon}\right)$ time. 

\begin{table}[t]
    \caption{Time complexity of the algorithms.}
    \label{tab:complexity}
    \centering
    \begin{tabular}{cc}
        \toprule
        \textbf{Method} & \textbf{Query Time} \\
        \midrule
        EstEff-TranProb~\cite{peng2021local} & $\tilde{O}\left(\frac{\Lmax^4}{\epsilon^2}\right)$ \\
        AMC / GEER~\cite{yang2023efficient} & $\tilde{O}\left(\frac{\Lmax^3}{\epsilon^2 d^2}\right)$ \\
        EstEff-MC~\cite{peng2021local} & $\tilde{O}\left(\frac{m}{(1 - \lambda_2)^2 \epsilon^2 d}\right)$ \\
        Lap. Solvers~\cite{spielman2004nearly, koutis2010approaching, koutis2011nearly, cohen2014solving, jambulapati2021ultrasparse} & $\tilde{O}\left(m\right)$ \\
        \midrule
        BiSPER \textbf{(Ours)} & $\tilde{O}\left(\min\left\{\frac{\Lmax^{7/3}}{\epsilon^{2/3}}, \frac{\Lmax^3}{\epsilon^2d^2}, m\Lmax\right\}\right)$ \\
        \bottomrule
    \end{tabular}
\end{table}

\section{Our Approach}
\label{sec:our-approach}
In this section, we present our methodology for estimating the ($\Lmax$-truncated) SPER $R_{\Lmax}(s, t)$. 
We introduce the BiSPER algorithm, a novel approach that combines a redesigned forward push operation with adaptive Monte Carlo sampling techniques. 

\subsection{High Level Idea}
By integrating Equation~\eqref{eq:L-truncated}, Lemma \ref{lem:invariant}, and the symmetry property of MSTP in Lemma \ref{lem:symmetry}, we can derive a direct estimator for $R_{\Lmax}(s, t)$, using a number ($N$) of random walks:
\begin{align}
\label{eq:L-truncated-estimator}
    \begin{split}
    \hat{R}_{\Lmax}(s, t) 
     = \sum_{\ell = 0}^{\Lmax}\left(\dfrac{\bs{q}_s^{(\ell)}(s)}{d(s)} - \dfrac{\bs{q}_s^{(\ell)}(t)}{d(t)}\right) + \sum_{\ell = 0}^{\Lmax}\left(\dfrac{\bs{q}_t^{(\ell)}(t)}{d(t)} - \dfrac{\bs{q}_t^{(\ell)}(s)}{d(s)}\right) & \\
    + \dfrac{1}{N}\sum_{i = 1}^{N}\sum_{\ell = 0}^{\Lmax}\sum_{v\in V}X^{(\ell)}_{s, i}(v)\left(\sum_{k = 0}^{\Lmax - \ell}\dfrac{\bs{r}_s^{(k)}(v)}{d(v)} - \sum_{k = 0}^{\Lmax - \ell}\dfrac{\bs{r}_t^{(k)}(v)}{d(v)}\right) & \\
    + \dfrac{1}{N}\sum_{i = 1}^{N}\sum_{\ell = 0}^{\Lmax}\sum_{v\in V}X^{(\ell)}_{t, i}(v)\left(\sum_{k = 0}^{\Lmax - \ell}\dfrac{\bs{r}_t^{(k)}(v)}{d(v)} - \sum_{k = 0}^{\Lmax - \ell}\dfrac{\bs{r}_s^{(k)}(v)}{d(v)}\right) & ,
    \end{split}
\end{align}
where $X_{s, i}^{(\ell)}(v) = \mathbb{I}\left[V_i^{(\ell)} = v \Given {V_i^{(0)}} = s\right]$ is an indicator random variable indicating whether the $\ell$-th node in the $i$-th random walk starting from $s$, $V^{(\ell)}_i$, is $v$. 
Similarly, $X_{t, i}^{(\ell)}(v)$ is defined for node $t$.

Our algorithm's core idea involves initially performing carefully designed forward push operations from nodes $s$ and $t$ until all degree-normalized residue values $\frac{\bs{r}^{(k)}(v)}{d(v)}$\footnote{Hereafter, we may occasionally omit the subscript when the statement applies to both $\bs{r}_s^{(k)}(v)$ and $\bs{r}_t^{(k)}(v)$, or both $\bs{q}_s^{(k)}(v)$ and $\bs{q}_t^{(k)}(v)$, for simplicity. If multiple subscripts are omitted in the same formula, they represent the same value.} are below a specified parameter $r_{\max}$. 
Following this, we adaptively sample a designated number of random walks from both $t$ and $s$, using the described estimator to compute an approximation of the $\Lmax$-truncated SPER.
This approach forms the basis of our Bidirectional Single-Pair Effective Resistance (BiSPER) algorithm.

The pseudo-code of our BiSPER algorithm is outlined in Algorithm \ref{alg:BiSPER}.
It operates in two phases: the \textit{push phase} and the \textit{adaptive Monte Carlo phase}. 
It invokes two procedures named BiSPER-Forward-Push and BiSPER-AMC, which  will be elaborated in subsequent sections.
\begin{algorithm}[t]
\DontPrintSemicolon
    \caption{BiSPER$(G, s, t, \Lmax, \epsilon)$}
    \label{alg:BiSPER}
    \KwIn{Graph $G = (V, E)$, node $s$ and $t$, maximum length $\Lmax$, absolute error $\epsilon$\\}
    \KwOut{$\hat{R}_{\Lmax}(s, t)$\\}
    $d \gets \min\{d(s), d(t)\}$, $r_{\max} \gets$ Equation~\eqref{eq:r-max} \\
    $\q_s^{(\ell)}, \q_t^{(\ell)} \gets \bs{0}$, $\ell = 0, 1, \cdots, \Lmax$ \\
    $\r_s^{(0)} \gets \bs{e}_s$, $\r_t^{(0)} \gets \bs{e}_t$, $\r_s^{(\ell)}, \r_t^{(\ell)} \gets \bs{0}$, $\ell = 1, 2, \cdots, \Lmax$ \\
    $Q_s[s]\text{.Update}(0, {1}/{d(s)})$, $Q_t[t]\text{.Update}(0, {1}/{d(t)})$ \\
    \For{$\ell = 0, 1, \cdots, \Lmax$}
    {
        \While{$\exists u$, such that ${\r_s^{(\ell)}(u)}\big/{d(u)} > r_{\max}$}
        {
            BiSPER-Forward-Push$_s(u, \ell)$
        }
        \While{$\exists u$, such that ${\r_t^{(\ell)}(u)}\big/{d(u)} > r_{\max}$}
        {
            BiSPER-Forward-Push$_t(u, \ell)$
        }
    }
    \tcc{Provide upper bounds for sampling random walks.}
    $T_{B_1} \gets (\Lmax + 1)(\Lmax + 2)r_{\max}$ \\
    $T_{B_2} \gets 2(\Lmax + 1) - \sum\limits_{\ell = 0}^{\Lmax}\sum\limits_{v\in V}\left(\q_s^{(\ell)}(v) + \q_t^{(\ell)}(v)\right)$ \\
    \If{$r_{\max} \ge 1 / d$}
    {
        $N \gets \left\lceil 8(\Lmax + 1)^2\log(2/p_f) / (\epsilon^2 d^2)\right\rceil$
    }
    \Else
    {
        $N \gets \left\lceil{2\min\left\{T_{B_1}^2, T_{B_2}^2\right\}\log(2 / p_f)}\big/{\epsilon^2}\right\rceil$ \\
    }
    $\hat{R}_{\Lmax} \gets \text{BiSPER-AMC}(N)$ \\
    \Return $\hat{R}_{\Lmax}$
\end{algorithm}

\subsection{Push Phase and Maintaining Prefix Sums}
In the push phase, our basic goal is to ensure all degree-normalized residues are below a threshold $r_{\max}$.
This can be easily achieved by calling Forward-Push$(u, \ell)$ for each $\ell = 0, 1, \cdots, \Lmax$ and node $u\in V$ where $\frac{\bs{r}^{(\ell)}(u)}{d(u)} > r_{\max}$.

However, according to Equation~\eqref{eq:L-truncated-estimator}, the estimator $\hat{R}_{\Lmax}(s, t)$ involves querying $\sum\limits_{k = 0}^{\Lmax - \ell}\frac{\bs{r}^{(k)}(v)}{d(v)}$, which can be seen as a prefix sum of degree-normalized residue values $\left(\frac{\bs{r}^{(k)}(v)}{d(v)}\right)_{k = 0}^{\Lmax}$.
If we query the prefix sum only when node $v$ is sampled, it introduces $O(\Lmax)$ additional cost, making each sample of random walk of length $\Lmax$ takes $O(\Lmax^2)$ time.
Pre-computing all prefix sums after the push phase costs $O(n\Lmax)$ time, which is linear to the number of nodes.
Both solutions are computationally expensive.

To address this problem, we use Binary Indexed Trees (BITs)~\cite{fenwick1994new}, also known as Fenwick Trees, to efficiently support prefix sum queries while dynamically maintaining the values in an array.
Given an array $(a_i)_{i=0}^{L - 1}$ of length $L$, a BIT is a data structure associated with it, designed to support two fundamental operations:
\begin{itemize}
    \item Update$(\ell, r)$: Increases the $\ell$-th element of the original array by $r$ and also maintain the data structure.
    \item Query$(\ell)$: Computes and returns the prefix sum up to the $\ell$-th element, i.e., $\sum\limits_{k = 0}^{\ell}a_k$.
\end{itemize}
Remarkably, both the Update$(\ell, r)$ and Query$(\ell)$ operations have a logarithmic time complexity of $O(\log L)$, where $L$ is the array length.
This logarithmic efficiency makes BITs an almost optimal choice for dynamically maintaining prefix sums in our BiSPER algorithm~\cite{pibiri2021practical}.

In the BiSPER algorithm, we employ two arrays of BITs, $Q_s$ and $Q_t$, where each entry, $Q_s[v]$ or $Q_t[v]$, represents a BIT associated with the sequences $\left(\frac{\r_s^{(\ell)}(v)}{d(v)}\right)_{\ell = 0}^{\Lmax}$ or $\left(\frac{\r_t^{(\ell)}(v)}{d(v)}\right)_{\ell = 0}^{\Lmax}$, respectively. 
Using these BITs, we design a novel forward-push operation named BiSPER-Forward-Push, detailed in Algorithm \ref{alg:BiSPER-Forward-Push}. 
This procedure redistributes residue values layer by layer, ensuring the degree-normalized values do not exceed $r_{\max}$. 
At the same time, it dynamically maintains both the residue values and their prefix sums, which are essential for the subsequent sampling process.
\begin{algorithm}[t]
\DontPrintSemicolon
    \caption{BiSPER-Forward-Push$_s(u, \ell)$}
    \label{alg:BiSPER-Forward-Push}
    $\q_s^{(\ell)}(u) \gets \q_s^{(\ell)}(u) + \r_s^{(\ell)}(u)$ \\
    \For{$v\in N(u)$}
    {
        \tcc{$Q_s[v]$ is a BIT maintaining $\sum_{k = 0}^{\ell}\frac{\bs{r}_s^{(k)}(v)}{d(v)}$.}
        $\r_s^{(\ell + 1)}(v) \gets \r_s^{(\ell + 1)}(v) + \frac{\r_s^{(\ell)}(u)}{d(u)}$ \\
        $Q_s[v]\text{.Update}(\ell + 1, \frac{\r_s^{(\ell)}(u)}{d(u)d(v)})$
    }
    $Q_s[u]\text{.Update}(\ell, -\frac{\r_s^{(\ell)}(u)}{d(u)})$ \\
    $\r_s^{(\ell)}(u) \gets 0$
\end{algorithm}

\subsection{Adaptive Monte Carlo Phase}
Once all degree-normalized residues are below the threshold $r_{\max}$, we move to the adaptive Monte Carlo phase. 
This phase involves the BiSPER-AMC procedure, detailed in Algorithm \ref{alg:BiSPER-AMC}. 
\begin{algorithm}[t]
\DontPrintSemicolon
    \caption{BiSPER-AMC$(N)$}
    \label{alg:BiSPER-AMC}
    $\hat{T}_{\text{sum}} \gets 0$, $\hat{\sigma}^2_{\text{sum}} \gets 0$ \\
    \For{$i = 1, 2, \cdots, N$}
    {
        \tcc{Sample two $\Lmax$-length random walks.}
        Generate $i$-th $\Lmax$-length random walk from $s$ and $t$: $\mathcal{W}_{s, i} = \left(V_{s, i}^{(0)} = s, V_{s, i}^{(1)}, \cdots, V_{s, i}^{(\Lmax)}\right)$, $\mathcal{W}_{t, i} = \left(V_{t, i}^{(0)} = t, V_{t, i}^{(1)}, \cdots, V_{t, i}^{(\Lmax)}\right)$ \\
        $\hat{T}_i \gets $ Equation \eqref{eq:Ti} \\
        $\hat{T}_{\text{sum}} \gets \hat{T}_{\text{sum}} + \hat{T}_i$, $\hat{\sigma}^2_{\text{sum}} \gets \hat{\sigma}^2_{\text{sum}} + \hat{T}_i^2$ \\
        $\hat{T} \gets \hat{T}_{\text{sum}} / i$, $\hat{\sigma}^2 \gets \hat{\sigma}^2_{\text{sum}} / i - \hat{T}^2$ \\ 
        \If{$\sqrt{\frac{2\hat{\sigma}^2\log(3 / p_f)}{i}} + \frac{6\min\{T_{B_1}, T_{B_2}\}\log(3 / p_f)}{i} \le \epsilon$}
        {
            \textbf{break}
        }
    }
    $\hat{R}_{\Lmax} \gets \hat{T}$ \\
    \For{$\ell = 0, 1, \cdots, \Lmax$}
    {
        $\hat{R}_{\Lmax} \gets \hat{R}_{\Lmax} + \frac{\q_s^{(\ell)}(s)}{d(s)} - \frac{\q_s^{(\ell)}(t)}{d(t)} + \frac{\q_t^{(\ell)}(t)}{d(t)} - \frac{\q_t^{(\ell)}(s)}{d(s)}$
    }
    \Return $\hat{R}_{\Lmax}$
\end{algorithm}

The BiSPER-AMC procedure samples $N$ random walks and uses the BITs, dynamically maintained during the push phase, to construct an estimator $T(s, t)$, which corresponds to the last two terms in Equation \eqref{eq:L-truncated-estimator} (formally defined in Definition \ref{def:estimator}). 

In the $i$-th iteration, we sample two random walks of length $\Lmax$ starting from $t$ and $s$. 
Let $V_{t, i}^{(\ell)}$ and $V_{s, i}^{(\ell)}$ denote the $\ell$-th sampled nodes in the $i$-th random walk. 
We construct the $i$-th sample of $T(s, t)$ as follows:
\begin{equation}
\label{eq:Ti}
    \begin{aligned}
       \hat{T}_i(s, t) 
       & = \sum_{\ell = 0}^{\Lmax}(Q_s[V_{s, i}^{(\ell)}]\text{.Query}(\Lmax - \ell) - Q_t[V_{s, i}^{(\ell)}]\text{.Query}(\Lmax - \ell) \\ 
       & + Q_t[V_{t, i}^{(\ell)}]\text{.Query}(\Lmax - \ell) - Q_s[V_{t, i}^{(\ell)}]\text{.Query}(\Lmax - \ell)). 
    \end{aligned}
\end{equation}
We then use the empirical mean of all $\hat{T}_i(s, t)$, denoted as $\hat{T}(s, t)$, as an estimator of $T(s, t)$.
Early termination is facilitated using Bernstein's inequality for empirical variance (Lemma \ref{lem:bernstein}).
After the adaptive Monte Carlo phase, $\hat{T}(s, t)$ is utilized to compute $\hat{R}_{\Lmax}(s, t)$, according to Equation \eqref{eq:L-truncated-estimator} and the definition of $T(s, t)$.

\subsection{Theoretical Analysis}
The correctness and time complexity of our BiSPER algorithm are established through the following theorems.
First, we prove that the BiSPER algorithm provides an estimator with an absolute error guarantee, as stated in Theorem \ref{thm:correctness}.
\begin{theorem}[Correctness of Approximation]
\label{thm:correctness}
    In our BiSPER algorithm, we have $$\Pr{\left|\hat{R}_{\Lmax}(s, t) - R_{\Lmax}(s, t)\right| < \epsilon} \ge 1 - p_f.$$
\end{theorem}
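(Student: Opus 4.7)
The plan is to reduce the claim to a concentration inequality for the adaptive Monte Carlo estimator $\hat{T}(s,t)$, after showing that $\hat{R}_{\Lmax}(s,t)$ is unbiased. First I would verify unbiasedness: starting from Equation~\eqref{eq:L-truncated-estimator}, the deterministic $\q$-terms are fixed, while the two sums over $i$ have expectations obtained via $\E{X_{s,i}^{(\ell)}(v)}=p^{(\ell)}(s,v)$. Reindexing $(\ell,k)\mapsto(\ell-k,k)$ converts each random sum into $\sum_{\ell}\sum_{k=0}^{\ell}\sum_v p^{(\ell-k)}(s,v)\,\r^{(k)}(v)/d(v)$, and then applying Lemma~\ref{lem:invariant} once from $s$ and once from $t$, combined with the MSTP symmetry of Lemma~\ref{lem:symmetry}, collapses everything to exactly $R_{\Lmax}(s,t)$ as in Equation~\eqref{eq:L-truncated}. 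Thus $|\hat{R}_{\Lmax}-R_{\Lmax}|=|\hat{T}-\E{\hat{T}}|$.

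Next I would establish the deterministic range bounds $|\hat{T}_i(s,t)|\le \min\{T_{B_1},T_{B_2}\}$ that drive both concentration arguments. For $T_{B_1}$: after the push phase every normalized residue satisfies $\r^{(k)}(v)/d(v)\le r_{\max}$, so each BIT query $Q_\cdot[v]\text{.Query}(\Lmax-\ell)$ is at most $(\Lmax+1)r_{\max}$; summing the four prefix-sum contributions over the $\Lmax+1$ layers, while absorbing the constant from pairwise cancellation inside Equation~\eqref{eq:Ti}, yields $T_{B_1}=(\Lmax+1)(\Lmax+2)r_{\max}$. For $T_{B_2}$: use the conservation identity $\sum_{\ell,v}(\q_s^{(\ell)}(v)+\r_s^{(\ell)}(v))=\Lmax+1$ (and likewise for $t$), which bounds the total leftover normalized residue mass accessible to any sampled node by $(\Lmax+1)-\sum_{\ell,v}\q_s^{(\ell)}(v)$; combining the $s$- and $t$-side contributions produces the displayed expression for $T_{B_2}$.

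Finally I would invoke the two concentration regimes specified in Algorithm~\ref{alg:BiSPER}. When $r_{\max}\ge 1/d$, no push is performed, so one can directly bound $|\hat{T}_i|\le 2(\Lmax+1)/d$, and Hoeffding's inequality with the fixed sample size $N=\lceil 8(\Lmax+1)^2\log(2/p_f)/(\epsilon^2 d^2)\rceil$ gives $\Pr{|\hat{T}-\E{\hat{T}}|>\epsilon}\le p_f$. When $r_{\max}<1/d$, the adaptive stopping rule matches the empirical Bernstein bound $\sqrt{2\hat\sigma^2\log(3/p_f)/i}+6\min\{T_{B_1},T_{B_2}\}\log(3/p_f)/i\le \epsilon$, and applying the empirical Bernstein inequality (Maurer--Pontil style) to bounded random variables with range $\min\{T_{B_1},T_{B_2}\}$ yields the claim; the worst-case $N$ chosen in Algorithm~\ref{alg:BiSPER} guarantees the loop terminates even if the stopping criterion is never triggered early.

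The main obstacles I anticipate are (i) carefully verifying the range bound $T_{B_1}$, since the alternating signs in Equation~\eqref{eq:Ti} must be tracked so that the worst-case range is not doubled, and (ii) controlling the adaptive stopping time: one must argue that the failure probability is preserved uniformly over all $i\le N$, typically via a union bound folded into the $\log(3/p_f)$ factor, rather than over each individual iteration.
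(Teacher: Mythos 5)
Your proposal follows essentially the same route as the paper's proof: unbiasedness of $\hat{T}$, the two deterministic range bounds $T_{B_1}=(\Lmax+1)(\Lmax+2)r_{\max}$ and $T_{B_2}$ (the latter via the residue--reserve conservation law, which is Lemma~\ref{lem:sum-of-residues}), Hoeffding's inequality for the worst-case sample size in each regime of $r_{\max}$, and the empirical Bernstein inequality for the adaptive early stopping. One small correction: the conservation identity holds for the cumulative residues, $\sum_{v\in V}\bigl(\q^{(\ell)}(v)+\sum_{k=0}^{\ell}\r^{(k)}(v)\bigr)=1$ for each $\ell$, not layerwise for $\q^{(\ell)}+\r^{(\ell)}$ as you wrote, but summing over $\ell$ still yields exactly your expression for $T_{B_2}$.
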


Theorem \ref{thm:correctness} suggests that our BiSPER algorithm produces an unbiased estimator of the $\Lmax$-truncated Effective Resistance (ER). 
For the SPER value $R(s, t)$, we can obtain an approximation with an absolute error guarantee by setting $\Lmax$ in BiSPER to the value specified in Lemma \ref{lem:L}.

Next, we bound the worst-case time complexity of the BiSPER algorithm in Theorem \ref{thm:complexity}. 
\begin{theorem}[Time Complexity]
\label{thm:complexity}
    If we set $r_{\max}$ as Equation \eqref{eq:r-max}, the time complexity of BiSPER is $\tilde{O}\left(\min\left\{ \frac{\Lmax^{7/3}}{\epsilon^{2/3}}, \frac{\Lmax^3}{\epsilon^2 d^2}, m\Lmax\right\}\right)$.
    \begin{equation}
        \label{eq:r-max}
        r_{\max} = 
        \begin{cases}
            0,\ \Lmax \ge \max\left\{\frac{m^{1/2}\epsilon d}{2\log^{1/2}(2/p_f)}, \frac{2m^{3/4}\epsilon^{1/2}}{3^{3/4}\log^{1/4}(2/p_f)}\right\}, \\
            \dfrac{1}{d},\ d \ge \max\left\{\frac{2^{5/3}(\Lmax+1)^{1/3}\log^{1/3}(2/p_f)}{3^{1/2}\epsilon^{2/3}},  \frac{2(\Lmax+1)\log^{1/2}(2/p_f)}{m^{1/2}\epsilon}\right\}, \\
            \dfrac{\epsilon^{2/3}}{2^{2/3}(\Lmax+1)^{4/3}\log^{1/3}(2/p_f)},\ \text{otherwise}.
        \end{cases}
    \end{equation}
\end{theorem}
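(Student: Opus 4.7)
The plan is to bound the costs of the push phase and the adaptive Monte Carlo phase separately, then verify that each of the three cases of $r_{\max}$ in \eqref{eq:r-max} yields one of the three terms inside the $\min$ by balancing these costs.

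For the push phase, each BiSPER-Forward-Push$(u,\ell)$ call costs $O(d(u)\log\Lmax)$, since it iterates over $N(u)$ and performs one $O(\log\Lmax)$ BIT update per neighbor. A push is triggered only when $\bs{r}^{(\ell)}(u)/d(u) > r_{\max}$, so it dissipates residue strictly greater than $d(u)\,r_{\max}$. Lemma~\ref{lem:invariant} together with probability conservation bounds the total residue mass ever arriving at any single layer by $1$; this amortized argument gives $\sum_u n_\ell(u)\, d(u) \le 1/r_{\max}$ per layer, so summing over the $\Lmax+1$ layers yields total push work $\tilde O(\Lmax/r_{\max})$. A trivial alternative is $\tilde O(m\Lmax)$ by counting edges processed per layer.

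For the adaptive Monte Carlo phase, each of the $N$ iterations samples two walks of length $\Lmax$ and performs $O(\Lmax)$ BIT queries, giving $\tilde O(N\Lmax)$ in total. Reading off the algorithm's choices, $N = O(\Lmax^2\log(1/p_f)/(\epsilon^2 d^2))$ when $r_{\max}\ge 1/d$, and otherwise $N = O(\min\{T_{B_1}^2,T_{B_2}^2\}\log(1/p_f)/\epsilon^2)$ with $T_{B_1} = O(\Lmax^2 r_{\max})$ and $T_{B_2} = O(\Lmax)$.

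The three cases of \eqref{eq:r-max} now follow by substitution. In Case~2 with $r_{\max}=1/d$: since $1/d(s) \le 1/d = r_{\max}$ at the initial $(s,0)$ residue (and likewise for $t$), the strict trigger $>$ is never met and no residue ever propagates, so push work is exactly zero and the total reduces to the sampling cost $\tilde O(\Lmax^3/(\epsilon^2 d^2))$. In Case~3 with $r_{\max}=\Theta(\epsilon^{2/3}/(\Lmax^{4/3}\log^{1/3}(1/p_f)))$: the push cost $\tilde O(\Lmax/r_{\max}) = \tilde O(\Lmax^{7/3}/\epsilon^{2/3})$ matches the sampling cost $\tilde O(\Lmax^5 r_{\max}^2/\epsilon^2) = \tilde O(\Lmax^{7/3}/\epsilon^{2/3})$, with $T_{B_1}\le T_{B_2}$ whenever $\epsilon\le\Lmax^{1/2}$, which holds for $\Lmax$ as in Lemma~\ref{lem:L}. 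In Case~1 with $r_{\max}=0$: all residues are pushed to zero in $\tilde O(m\Lmax)$ work, and the Monte Carlo phase contributes $0$ since $T_{B_1}=0$ forces $N=0$. The main obstacle is the case-boundary algebra: the explicit constants and $\log(1/p_f)$ factors in \eqref{eq:r-max} are chosen so that the thresholds on $\Lmax$ in Case~1 force $m\Lmax \le \min\{\Lmax^{7/3}/\epsilon^{2/3},\,\Lmax^3/(\epsilon^2 d^2)\}$, while the thresholds on $d$ in Case~2 force $\Lmax^3/(\epsilon^2 d^2) \le \min\{\Lmax^{7/3}/\epsilon^{2/3},\,m\Lmax\}$. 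Once these crossover inequalities are verified, taking the union over the three cases delivers the $\min$ stated in the theorem.
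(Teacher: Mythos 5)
Your proposal is correct and follows essentially the same route as the paper's proof: an amortized bound of $\tilde O(\Lmax/r_{\max})$ (resp.\ $\tilde O(m\Lmax)$, resp.\ $0$) on the push phase via the residue upper bound $\r^{(\ell)}(u)\le p^{(\ell)}(s,u)$, the bound $\tilde O(N\Lmax)$ on the sampling phase with $T_{B_1}=O(\Lmax^2 r_{\max})$ and $T_{B_2}=O(\Lmax)$, and then balancing the two costs to obtain $r_{\max}=\Theta(\epsilon^{2/3}/\Lmax^{4/3})$ in the generic case and reading off the crossover thresholds for the other two cases. The only difference is cosmetic: you defer the explicit case-boundary algebra, which the paper also states only as "observations" without full verification.
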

Due to space limitations, we only state the main theorems in this section and defer the proofs to Appendices \ref{app:proof-correctness} and \ref{app:proof-complexity}.

\subsection{Discussions on Algorithm Lower Bounds and Comparison with Other Methods}
From our theoretical analysis, we conclude that the BiSPER algorithm significantly improves by $\tilde{O}\left(\frac{\Lmax^{2/3}}{\epsilon^{4/3}d^2}\right)$ over the previous best algorithms, AMC and GEER.
In this subsection, we also discuss the computational complexity lower bound of SPER algorithms.
We show that our BiSPER algorithm is near-optimal on hard instance graphs, and performs the best among all known algorithms on graphs other than hard instances.

In the full version \cite{cai2023effectivefull} of their paper \cite{cai2023effective}, Cai et al. establish a theorem that delineates the lower bound applicable to any algorithm designed for estimating the SPER:
\begin{theorem}[Theorem A.1, \cite{cai2023effectivefull}]
    There are $c_0 > 0$ and infinitely many $n$ such that given any $d \in [4, n]$ and any $\ell \in [4, n]$, for graphs of $n$ vertices and degree $d$, any local algorithm to approximate $R_G(s, t)$ with success probability $0.6$ and (relative) approximation ratio $1 + c_0\min\{d, \ell\}$ needs $\Omega(d n / \ell)$ queries.
\end{theorem}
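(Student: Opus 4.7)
The plan is to establish this lower bound via Yao's minimax principle combined with a two-point indistinguishability argument. I would construct two distributions $\mathcal{D}_0$ and $\mathcal{D}_1$ over $d$-regular $n$-vertex graphs, each with a designated pair $(s,t)$, such that (i) the effective resistance $R_G(s,t)$ differs by a multiplicative factor $1 + \Omega(\min\{d, \ell\})$ between typical samples from $\mathcal{D}_0$ and $\mathcal{D}_1$, and (ii) no deterministic algorithm issuing fewer than $c_1 \cdot d n / \ell$ adjacency queries can distinguish a sample from $\mathcal{D}_0$ from a sample from $\mathcal{D}_1$ with advantage exceeding $0.2$. Combining the two ingredients, any algorithm that achieves the claimed relative approximation $1 + c_0 \min\{d,\ell\}$ with probability at least $0.6$ must use $\Omega(dn/\ell)$ queries in expectation on one of the two distributions.

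For the construction, I would begin with a base random $d$-regular graph on $n - \Theta(\ell)$ vertices and plant a ``signature gadget'' $\Gamma$ between $s$ and $t$: in $\mathcal{D}_0$ the gadget is a single path of length $\ell$, whereas in $\mathcal{D}_1$ the gadget is a bundle of $\min\{d, \ell\}$ internally disjoint length-$\ell$ paths. In both cases $\Gamma$ is attached to the background graph at one of $\Theta(n/\ell)$ uniformly selected docking positions, and degrees are rebalanced by a random switching so that the final graph remains $d$-regular. An electrical-network calculation using the parallel/series laws, together with the observation that the random $d$-regular background contributes only $O(\log n / d)$ to $R_G(s,t)$ via its $O(\log n)$ mixing time, then yields $R_G(s,t) \approx \ell$ under $\mathcal{D}_0$ and $R_G(s,t) \approx \ell / \min\{d, \ell\}$ under $\mathcal{D}_1$, supplying the required multiplicative gap whenever $\ell$ is not tiny compared to $\log n$.

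The indistinguishability step is the heart of the argument. I would expose the graph to the algorithm through the configuration model: every neighbor query is resolved by the principle of deferred decisions, pairing the queried half-edge with a uniformly random unmatched half-edge. Because the docking positions of $\Gamma$ are uniform among $\Theta(n/\ell)$ candidates and only one of the $d$ half-edges of each docking vertex actually attaches to $\Gamma$, a union bound shows that the probability the transcript touches any gadget edge within the first $q$ queries is at most $O(q \ell / (dn))$, which stays below a small constant for $q \le c_1 dn/\ell$. On the complementary event, the transcripts under $\mathcal{D}_0$ and $\mathcal{D}_1$ can be coupled to be identical, so any estimator based on them has vanishing advantage.

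The main obstacle will be making the coupling in the indistinguishability step fully rigorous while preserving $d$-regularity. Two subtleties will need care: first, the random switching that accommodates the two gadgets must alter only $O(\ell)$ background half-edges and must itself be undetectable within the same query budget, which I would arrange by restricting switching to edges incident to $\Gamma$'s docking vertices and using a joint configuration-model coupling between $\mathcal{D}_0$ and $\mathcal{D}_1$. Second, one must prevent the algorithm from gaining information for free by querying $s$ or $t$ directly, so the entry points of $\Gamma$ should sit at a small planted distance from $s$ and $t$, and every candidate docking vertex must be locally indistinguishable from a generic vertex of the background graph until a gadget-incident edge is actually revealed. Calibrating these parameters uniformly across the regime $d, \ell \in [4, n]$ to obtain the claimed $\Omega(dn/\ell)$ bound is the most delicate part of the argument.
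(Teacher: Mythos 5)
A framing point first: the paper you are working from does not prove this statement. It is imported verbatim as Theorem~A.1 of the full version of Cai et al.\ and used purely as a black box to argue that some instances force $\Omega(m)$ queries; there is no in-paper proof to compare yours against. So your proposal must stand on its own. The general machinery you invoke --- Yao's minimax principle, a planted two-point distribution, exposure via the configuration model with deferred decisions --- is the right genre for a local query lower bound of this shape, and the target gap you aim for (resistance values in ratio $\min\{d,\ell\}$, a hidden structure occupying an $O(\ell/(dn))$ fraction of the half-edges) is dimensionally consistent with the claimed $\Omega(dn/\ell)$ bound.

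The concrete construction, however, has a gap I believe is fatal as written. You claim $R_G(s,t)\approx \ell$ under $\mathcal{D}_0$ and $\approx \ell/\min\{d,\ell\}$ under $\mathcal{D}_1$, with the planted gadget carrying the current. But in a connected $d$-regular graph, $s$ and $t$ each send $d-O(1)$ half-edges into the random regular background, and --- as you yourself note --- that background connects any two of its vertices with resistance $O(\log n/d)$. This background connection is electrically in \emph{parallel} with your gadget, so the parallel law gives $R_G(s,t)=O(\log n/d)$ under \emph{both} distributions and the multiplicative gap collapses: the length-$\ell$ path is short-circuited. The only way a gadget of resistance $\approx\ell$ can dominate $R_G(s,t)$ is if it is a genuine bottleneck cut separating $s$ from $t$, but that creates the opposite tension with your indistinguishability step: a bottleneck anchored at (or at planted constant distance from) the \emph{known} query vertices $s$ and $t$ is not hidden among $\Theta(n/\ell)$ uniformly random docking positions, so the $O(q\ell/(dn))$ collision bound that drives your coupling no longer applies; conversely, a gadget hidden at a uniformly random position far from $s$ and $t$ is a dangling appendage that barely perturbs $R_G(s,t)$. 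Your sketch asserts both properties at once (``between $s$ and $t$'' and ``attached at one of $\Theta(n/\ell)$ uniformly selected docking positions'') without reconciling them, and the resistance calculation and the query lower bound each rely on a different, mutually exclusive reading. Until the construction commits to a topology in which the distinguishing structure simultaneously controls $R_G(s,t)$ up to a factor $\min\{d,\ell\}$ \emph{and} is information-theoretically hidden from $dn/\ell$ queries, the two halves of the Yao argument do not connect.
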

Setting $\ell = 4$ and $d \ge \ell$, we conclude that there exists a family of graphs where all algorithms estimating SPER within an additive error of $\epsilon = 4c_0 R(s, t)$ require $\Omega(dn / \ell) = \Omega(m)$ time. 
On these hard instances, nearly-linear time Laplacian solvers~\cite{spielman2004nearly, koutis2010approaching, koutis2011nearly, cohen2014solving, jambulapati2021ultrasparse} are nearly optimal due to only poly-logarithmic overhead.
However, this worst-case near-optimality allows for improved performance on other graph families while still ensuring nearly optimal running times for hard instances.
Note that $\Lmax = O\left(\log_{\frac{1}{\lambda}} \frac{1}{\epsilon(1 - \lambda)}\right)$.
For families of graphs with a spectral radius $\lambda$ bounded by a constant less than $1$ --- commonly observed in real-world power-law graphs~\cite{qi2021real} --- we have $\Lmax = O\left(\log \frac{1}{\epsilon}\right)$ in such cases.
Under this assumption, the time complexity is $\tilde{O}\left(\frac{1}{\epsilon^2}\right)$ for the EstEff-TranProb algorithm~\cite{peng2021local}, $\tilde{O}\left(\frac{1}{\epsilon^2 d^2}\right)$ for AMC and GEER~\cite{yang2023efficient}, 
$\tilde{O}\left(\frac{m}{\epsilon^2 d}\right)$ for EstEff-MC~\cite{peng2021local}, $\tilde{O}\left(m\right)$ for Laplacian solvers~\cite{spielman2004nearly, koutis2010approaching, koutis2011nearly, cohen2014solving, jambulapati2021ultrasparse}, and $\tilde{O}\left(\min\left\{\frac{1}{\epsilon^{2/3}}, \frac{1}{\epsilon^2d^2}, m\right\}\right)$ for our BiSPER algorithm. 
Thus, under the $1 - \lambda = \Theta(1)$ assumption, BiSPER performs the best among all candidate algorithms.

It should be noted that for landmark-based algorithms, it is not feasible to set the push threshold, $r_{\max}$, and the number of samples, $T$, to guarantee that an estimator meets the desired error requirement. 
Consequently, comparing the complexity of landmark-based algorithms with our BiSPER algorithm is not straightforward. 
In Experiments II and III in Section \ref{sec:experiments}, we observe comparable empirical performance, indicating that BiSPER is at least as effective as landmark-based approaches. 
However, a critical advantage of our BiSPER algorithm is its capability to handle $\Lmax$-truncated SPER queries, a feature not supported by the landmark-based algorithms.

Although the idea of using a bidirectional method in our BiSPER algorithm is inspired by the BiMSTP algorithm in \cite{banerjee2015fast}, our BiSPER algorithm distinguishes itself from related works, including BiMSTP, in several key aspects:
\begin{itemize}
    \item \underline{Innovative Forward Push Procedure:} Our algorithm is the first to introduce a modified forward push procedure using Binary Indexed Trees to dynamically maintain the prefix sums of normalized residue values. 
    This adaptation is non-trivial and essential for designing a faster bidirectional SPER estimation algorithm.
    \item \underline{Align Bidirectional Methods with Related Works:} The BiMSTP algorithm guarantees expected running time only for randomly selected target nodes $t$. 
    Additionally, it only ensures relative error guarantees for transition probabilities above a threshold $\delta$. 
    Without positive lower bounds for MSTPs, direct application in SPER estimation with absolute error guarantees is not feasible. 
    We resolve these issues by providing a detailed analysis that establishes worst-case time complexity for any nodes $s$ and $t$. 
    This aligns our algorithm with related works while achieving more favorable complexity bounds, highlighting its theoretical and practical superiority.
\end{itemize}

\section{Experiments}
\label{sec:experiments}
In this section, we evaluate the efficiency of our BiSPER algorithm and compare it with other algorithms through experiments:
\begin{itemize}
    \item First, we compare our BiSPER algorithm with other algorithms on real-world graphs of various sizes.
    Due to large $\Lmax$ required for accurate SPER approximation, obtaining the ground truth of SPERs on large graphs (such as Friendster) using the Power Iteration algorithm is infeasible. 
    Therefore, we compare our BiSPER algorithm with other transition-probabilities-based algorithms for estimating $\Lmax$-truncated ER using a fixed maximum length of $\Lmax = 100$\footnote{Even when $\Lmax$ is set to $100$ it still takes about one week to compute the truncated SPER values on the Friendster dataset.}. This setting also ensures a fair comparison with methods designed for $\Lmax$-truncated ER queries; and it is also similar to the choices made in prior works~\cite{peng2021local, yang2023efficient}.
    \item Then we compare our BiSPER algorithm with other algorithms on the relative small graphs.
    We use Power Iteration to get the ground truths on the Facebook dataset and a synthetic Erd\H{o}s-R{\'e}nyi graph, where obtaining ground truths is feasible, and then compare the efficiency of representative transition-probabilities-based and landmark-based algorithms on them.
\end{itemize}

\subsection{Datasets and Implementation Details}
Our experiments are conducted on six real-world SNAP datasets \cite{jure2014snapnets}, varying in size as outlined in Table \ref{tab:datasets}, and a synthetic Erd\H{o}s-R{\'e}nyi graph, detailed in Experiment III.
All our experiments are conducted on a Linux server equipped with an Intel(R) Xeon(R) Silver 4114 CPU @ 2.20GHz 40-core processor and 692GB of RAM. 
Reading, loading of graphs, and allocation of space for data structures are treated as pre-processing steps and excluded from the computation of running time.
We implement all algorithms in C++ and compile them using g++ 7.5 with the \texttt{-O3} optimization flag. 
No parallelism techniques are used in our code except in the ground truth generation with Power Iteration. 
The eigenvalues $\lambda$ of all graphs are obtained via the \texttt{scipy.sparse.linalg.eigs} function, which utilizes the ARPACK~\cite{lehoucq1998arpack} algorithm.
Our code is available at: \url{https://github.com/GuanyuCui/BiSPER}.
\begin{table}[t]
    \centering
    \caption{Statistics of datasets.}
    \label{tab:datasets}
    \begin{tabular}{lrrrrrr}
        \toprule
        \textbf{Name} & $n$ & $m$ & $\dmin$ & $\dmax$ & $\dbar$ & $\lambda$ \\
        \midrule
        Facebook & 4,039 & 88,234 & 1 & 1045 & 43.69 & 0.9992 \\
        DBLP & 317,080 & 1,049,866 & 1 & 343 & 6.62 & 0.9973 \\
        Youtube & 1,134,890 & 2,987,624 & 1 & 28754 & 5.27 & 0.9980 \\
        Orkut & 3,072,441 & 117,185,083 & 1 & 33313 & 76.28 & 0.9948 \\
        LiveJournal & 3,997,962 & 34,681,189 & 1 & 14815 & 17.35 & 0.9999 \\
        Friendster & 65,608,366 & 1,806,067,135 & 1 & 5214 & 55.06 & 0.9995 \\
        \bottomrule
    \end{tabular}
\end{table}

\begin{figure*}[t]
    \centering
    \caption{Results of Experiment I. The average error of the two outliers on the Facebook dataset is less than \texttt{1e-15}.}
    \label{fig:experiment-I}
    \includegraphics[width = 0.95\linewidth]{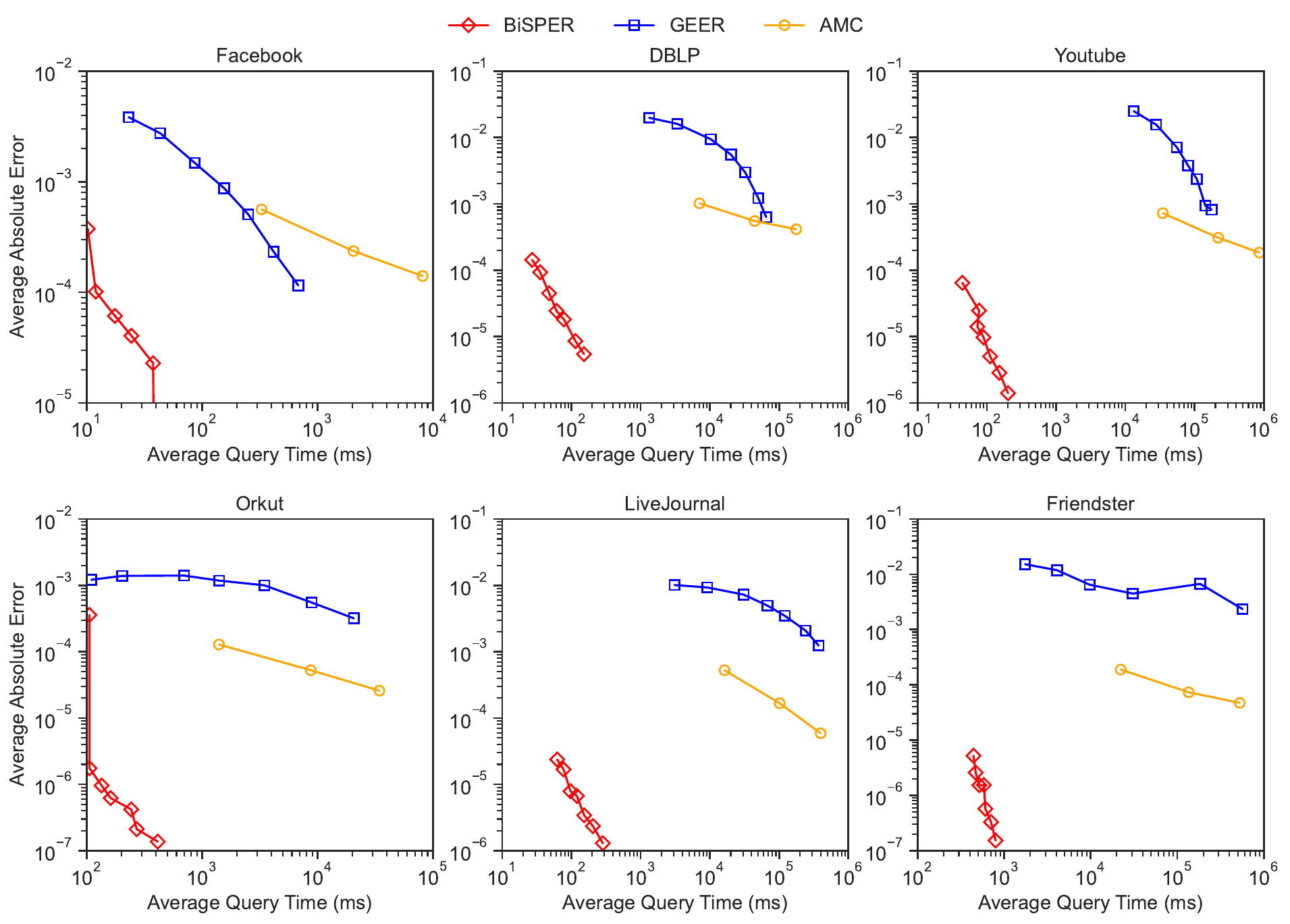}
\end{figure*}

\subsection{Experiment I: Query Efficiency for Truncated Effective Resistance on Real-World Graphs}
{\header \bf Competitors and Parameters.}
In this experiment, we aim to assess the performance of our BiSPER algorithm compared to other algorithms that use transition probabilities, specifically AMC and GEER, as introduced by Yang et al. \cite{yang2023efficient}. 
We exclude EstEff-TranProb \cite{peng2021local} as a baseline since Yang et al. \cite{yang2023efficient} have shown that AMC and GEER outperform it.

We will select 100 pairs of nodes from each graph dataset through uniform sampling, keeping these pairs constant across evaluations to ensure fair comparisons. 
The maximum random walk length, $\Lmax$, will be standardized to 100 for all algorithms.
To determine the ground truth values for ER, we will use the Power Iteration method to compute transition probabilities, followed by Equation \eqref{eq:L-truncated} to calculate the $\Lmax$-truncated ER values.

The failure probability $p_f$ is set to $0.01$ for all algorithms that allow this parameter, following the setting in \cite{yang2023efficient}. 
Other parameters for the algorithms will be varied as follows:
\begin{itemize}
    \item For BiSPER and GEER, the error parameter $\epsilon$ will be varied within $\{\texttt{1e-3}, \texttt{2e-3}, \texttt{5e-3}, \texttt{1e-2}, \texttt{2e-2}, \texttt{5e-2}, \linebreak \texttt{1e-1}\}$, except for GEER on the Friendster dataset, where it will be within $\{\texttt{1e-2}, \texttt{2e-2}, \texttt{5e-2}, \texttt{1e-1}, \texttt{2e-1}, \linebreak \texttt{5e-1}\}$ due to the rapidly increasing running time of GEER as $\epsilon$ decreases.
    \item For AMC, $\epsilon$ will be varied within $\{\texttt{1e-1}, \texttt{2e-1}, \texttt{5e-1}\}$, due to similar reason.
\end{itemize}

{\header \bf Results and Analysis.}
We run the algorithms under various parameters, recording their average running times in milliseconds and the average absolute errors across all 100 queries. 
The relationship between the average running time and the average absolute error for each dataset is visualized through line charts in Figure \ref{fig:experiment-I}.
The figure shows that BiSPER not only achieves the shortest running times, consistently below $10^3$ ms, but also maintains the lowest average absolute errors, not exceeding $10^{-3}$. 
On datasets like DBLP, Youtube, and LiveJournal, BiSPER shows a substantial speed improvement, exceeding a $10\times$ enhancement over GEER. 
Across all datasets, BiSPER outperforms AMC with more than a $10\times$ increase in speed. 
Notably, BiSPER's efficiency on graphs with lower average degrees supports our complexity analysis.

\begin{figure}[t]
    \centering
    \caption{Results of Experiment II.}
    \label{fig:experiment-II}
    \includegraphics[width = 0.5\linewidth]{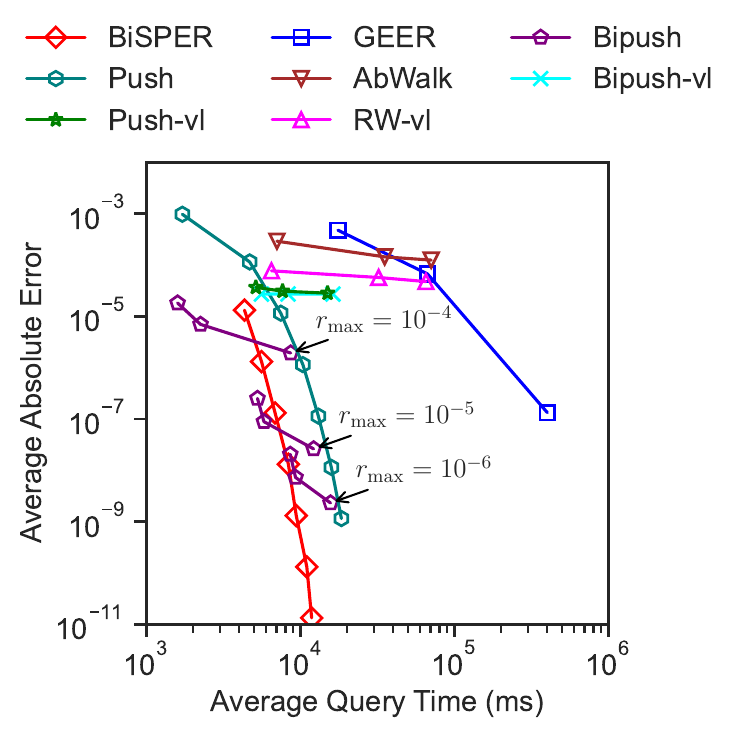}
\end{figure}

\subsection{Experiment II: Query Efficiency for Effective Resistance on Real-World Graphs}
{\header \bf Competitors and Parameters.}
In this experiment, we aim to evaluate the performance of our BiSPER algorithm against transition-probabilities-based and landmark-based methods. 
This includes the GEER algorithm by Yang et al.~\cite{yang2023efficient}, Bipush, Push and AbWalk by Liao et al.~\cite{liao2023efficient}, and Bipush-vl, Push-vl and RW-vl by Liao et al.~\cite{liao2024efficient}, using the Facebook dataset as our testbed.
AMC is excluded from this comparison due to its longer execution time, as revealed in Experiment I. 
Among the landmark-based algorithms, only Bipush, Push, and AbWalk are selected for their superior performance in~\cite{liao2023efficient}. 
We also exclude EstEff-MC~\cite{peng2021local}, following Yang et al.'s findings that AMC and GEER surpass it. 
Laplacian solvers are omitted due to the lack of practical implementations.

For consistency, we use the same 100 node pairs sampled from the Facebook graph as in the previous experiment, with the failure probability $p_f$ keeping $0.01$.
To determine the ground truth ER values, we set $\epsilon = \texttt{1e-17}$ and use Equation \eqref{eq:L} to ascertain the maximum random walk length, $L_{\max, \epsilon = \texttt{1e-17}}$. 
The $\Lmax$-truncated ER values, serving as our ground truth, are calculated using Power Iteration and Equation \eqref{eq:L-truncated}.

For the three single-landmark algorithms, the node with the largest degree is selected as the landmark. 
For the three multi-landmark algorithms, the top 100 nodes with the highest degrees are chosen as landmarks, and the number of sampled random walks is set to \texttt{1e5} to construct the index structure.
Other parameters for the algorithms are varied as follows:
\begin{itemize}
    \item For BiSPER, $\epsilon$ in $\{\texttt{1e-7}, \texttt{1e-6}, \texttt{1e-5}, \cdots, \texttt{1e-1}\}$.
    \item For GEER, $\epsilon$ in $\{\texttt{1e-3}, \texttt{1e-2}, \texttt{1e-1}\}$, due to its rapidly increasing running time as $\epsilon$ decreases.
    \item For Bipush, we vary the error parameter $r_{\max}$ and the number of samples $T$ in $\{\texttt{1e-6}, \texttt{1e-5}, \texttt{1e-4}\}$ and $\{\texttt{1e3}, \texttt{1e4}, \texttt{1e5}\}$.
    \item For Push, we vary $r_{\max}$ in $\{\texttt{1e-10}, \texttt{1e-9}, \cdots, \texttt{1e-4}\}$.
    \item For AbWalk, we vary $T$ in $\{\texttt{1e5}, \texttt{5e5}, \texttt{1e6}\}$.
    \item For Bipush-vl, we vary $r_{\max}$ in $\{\texttt{1e-7}, \texttt{2e-7}, \texttt{3e-7}\}$ and set $T$ to be 1,000.
    \item For Push-vl, we vary $r_{\max}$ in $\{\texttt{1e-7}, \texttt{2e-7}, \texttt{3e-7}\}$.
    \item For RW-vl, we vary $T$ in $\{\texttt{1e6}, \texttt{5e6}, \texttt{1e7}\}$.
\end{itemize}

{\header \bf Results and Analysis.}
As Experiment I, the relationship between the average running time and the average absolute error for each dataset is visualized through line charts in Figure \ref{fig:experiment-II}.
It reveals that BiSPER, Bipush, and Push demonstrate comparable efficiencies, with BiSPER showing superior performance, particularly when the average absolute error is below $10^{-7}$. 
This highlights BiSPER's effectiveness in achieving lower errors. 
It's also noteworthy that the landmark-based algorithms, Bipush and Push, cannot guarantee an absolute error below a pre-determined threshold $\epsilon$ through direct setting of the error parameters $r_{\max}$ or sample size $T$.

\subsection{Experiment III: Query Efficiency for Effective Resistance on Synthetic Graphs}
Real-world networks often follow power-law distributions. 
To examine how various ER algorithms perform on graphs that don't follow these distributions, we conducted a series of experiments using synthetic Erd\H{o}s-R{\'e}nyi random graphs.

{\header \bf Dataset Generation.} 
We create an Erd\H{o}s-R{\'e}nyi random graph with parameters $(n, p) = (5000, 0.005)$ and focused on its largest connected component for our synthetic dataset. 
This component consists of 5,000 nodes and 62,619 edges, featuring a minimum degree $\dmin = 8$, a maximum degree $\dmax = 44$, an average degree $\dbar = 25.05$, and $\lambda = 0.3900$.

{\header \bf Competitors and Parameters.}
Our comparison of the BiSPER algorithm uses the baseline algorithms from Experiment II as benchmarks. We varied their parameters as follows:
\begin{itemize}
    \item For BiSPER and GEER, we vary $\epsilon$ in $\{\texttt{1e-7}, \texttt{1e-6}, \cdots, \texttt{1e-1}\}$.
    \item For Bipush, we vary the error parameter $r_{\max}$ and the number of samples $T$ in $\{\texttt{1e-6}, \texttt{1e-5}, \texttt{1e-4}\}$ and $\{\texttt{1e3}, \texttt{1e4}\}$.
    \item For Push, we vary $r_{\max}$ in $\{\texttt{1e-7}, \texttt{1e-6}, \cdots, \texttt{1e-1}\}$.
    \item For AbWalk, we vary $T$ in $\{\texttt{1e3}, \texttt{5e3}, \texttt{1e4}\}$.
    \item For Bipush-vl, we vary $r_{\max}$ in $\{\texttt{1e-7}, \texttt{2e-7}, \texttt{3e-7}\}$ and set $T$ to be 1,000.
    \item For Push-vl, we vary $r_{\max}$ in $\{\texttt{1e-7}, \texttt{2e-7}, \texttt{3e-7}\}$.
    \item For RW-vl, we vary $T$ in $\{\texttt{1e4}, \texttt{5e4}, \texttt{1e5}\}$.
\end{itemize}

{\header \bf Results and Analysis.}
Consistent with earlier experiments, Figure \ref{fig:experiment-III} presents the trade-off between average absolute error and average running time across different algorithms on synthetic graphs.
As shown in the figure, our BiSPER algorithm consistently performs well. 
In contrast, landmark-based algorithms, typically strong performers on real-world graphs, struggle in regions of small absolute error. 
The lack of high-degree hub nodes in Erd\H{o}s-R{\'e}nyi graphs likely hinders effective landmark selection, leading to unguaranteed running times for these algorithms.

\begin{figure}[t]
    \centering
    \caption{Results of Experiment III.}
    \label{fig:experiment-III}
    \includegraphics[width = 0.5\linewidth]{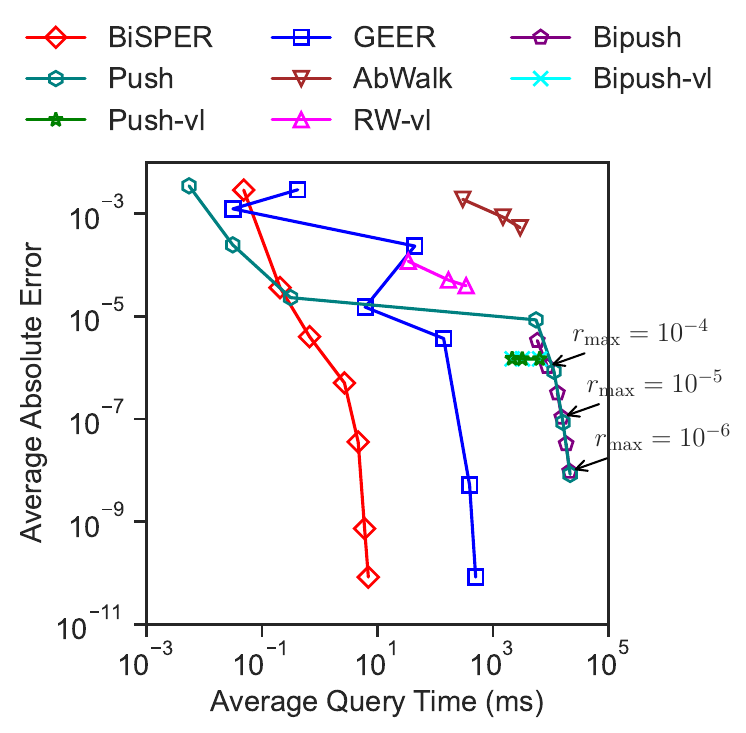}
\end{figure}
\section{Conclusion}
Effective resistance is a crucial measure of proximity in graph theory.
However, current Single-Pair Effective Resistance (SPER) estimation algorithms often suffer from high worst-case complexity, lack error guarantees, or are too theoretical and complex for practical use.
In this paper, we introduce the BiSPER algorithm, a refined bidirectional approach. 
Our theoretical analysis and extensive experimental evaluations demonstrates BiSPER's superiority over methods based on transition probabilities and commute times. 

\bibliographystyle{plain}
\bibliography{BiSPER}

\appendix
\section{Proof of Theorem \ref{thm:correctness}}
\label{app:proof-correctness}
We will first prove the unbiasedness of the estimators, and then prove the correctness of the approximation.
We begin with the definition of the estimators.
\begin{definition}[Estimator]
\label{def:estimator}
    Let $\hat{R}_{\Lmax}(s, t)$ be the $\Lmax$-truncated effective resistance estimator defined in Equation \eqref{eq:L-truncated-estimator}. 
    For convenience, we define another value $T(s, t)$ as follows:
    \begin{align*}
        T(s, t) 
        & = \sum\limits_{\ell = 0}^{\Lmax}\sum\limits_{v\in V}{p^{(\ell)}(s, v)}\left(\sum\limits_{k = 0}^{\Lmax - \ell}\frac{\bs{r}_s^{(k)}(v)}{d(v)} - \sum\limits_{k = 0}^{\Lmax - \ell}\frac{\bs{r}_t^{(k)}(v)}{d(v)}\right) \\
        & + \sum\limits_{\ell = 0}^{\Lmax}\sum\limits_{v\in V}{p^{(\ell)}(t, v)} \left(\sum\limits_{k = 0}^{\Lmax - \ell}\frac{\bs{r}_t^{(k)}(v)}{d(v)} - \sum\limits_{k = 0}^{\Lmax - \ell}\frac{\bs{r}_s^{(k)}(v)}{d(v)}\right).
    \end{align*}
    We also define each sample $\hat{T}_i(s, t)$ as follows:
    \begin{align*}
        \hat{T}_i(s, t) 
        & = \sum_{\ell = 0}^{\Lmax}\sum_{v\in V}X^{(\ell)}_{s, i}(v)\left(\sum_{k = 0}^{\Lmax - \ell}\dfrac{\bs{r}_s^{(k)}(v)}{d(v)} - \sum_{k = 0}^{\Lmax - \ell}\dfrac{\bs{r}_t^{(k)}(v)}{d(v)}\right) \\
        & + \sum_{\ell = 0}^{\Lmax}\sum_{v\in V}X^{(\ell)}_{t, i}(v)\left(\sum_{k = 0}^{\Lmax - \ell}\dfrac{\bs{r}_t^{(k)}(v)}{d(v)} - \sum_{k = 0}^{\Lmax - \ell}\dfrac{\bs{r}_s^{(k)}(v)}{d(v)}\right),
    \end{align*}
    and the estimator $\hat{T}(s, t)$, or $\hat{T}$ for short, as follows:
    \begin{align*}
        \hat{T}(s, t) 
        & = \dfrac{1}{N}\sum_{i = 1}^{N}T_i(s, t).
    \end{align*} 
\end{definition}

Next, we prove the unbiasedness of the estimators defined above.
\begin{lemma}[Unbiasedness of $\hat{T}$ and $\hat{R}_{\Lmax}$]
In Algorithm \ref{alg:BiSPER-AMC}, we have 
    $\E{\hat{T}(s, t)} = T(s, t)$ and $\E{\hat{R}_{\Lmax}(s, t)} = R_{\Lmax}(s, t)$.
\end{lemma}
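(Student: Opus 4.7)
The plan is to split the lemma into two parts: first establish $\E{\hat{T}(s,t)} = T(s,t)$ by a direct linearity-of-expectation argument, and then bootstrap this to $\E{\hat{R}_{\Lmax}(s,t)} = R_{\Lmax}(s,t)$ by invoking the invariant in Lemma \ref{lem:invariant} together with the symmetry property in Lemma \ref{lem:symmetry}.

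For the first part, observe that $X_{s,i}^{(\ell)}(v)$ is a Bernoulli indicator with success probability $\Pr{V_{s,i}^{(\ell)} = v \mid V_{s,i}^{(0)} = s} = p^{(\ell)}(s,v)$, and similarly for $X_{t,i}^{(\ell)}(v)$. Applying expectation term-by-term to the definition of $\hat{T}_i(s,t)$ immediately yields $\E{\hat{T}_i(s,t)} = T(s,t)$. Since the $N$ random walks are independent and identically distributed, a second application of linearity gives $\E{\hat{T}(s,t)} = \frac{1}{N}\sum_{i=1}^N T(s,t) = T(s,t)$.

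For the second part, I would expand each of the four pieces of $R_{\Lmax}(s,t) = \sum_{\ell=0}^{\Lmax}\bigl(\tfrac{p^{(\ell)}(s,s)}{d(s)} - \tfrac{p^{(\ell)}(s,t)}{d(t)} - \tfrac{p^{(\ell)}(t,s)}{d(s)} + \tfrac{p^{(\ell)}(t,t)}{d(t)}\bigr)$ using Lemma \ref{lem:invariant}. For instance, $\frac{p^{(\ell)}(s,s)}{d(s)} = \frac{\q_s^{(\ell)}(s)}{d(s)} + \sum_{k=0}^{\ell}\sum_v \r_s^{(\ell-k)}(v)\frac{p^{(k)}(v,s)}{d(s)}$, and Lemma \ref{lem:symmetry} replaces $\frac{p^{(k)}(v,s)}{d(s)}$ with $\frac{p^{(k)}(s,v)}{d(v)}$. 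Doing this for all four terms isolates the $\q$-contributions, which line up exactly with the first two summations in Equation~\eqref{eq:L-truncated-estimator}, and leaves four residual double sums indexed by $(\ell,k)$ with $0\le k \le \ell \le \Lmax$.

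The main obstacle will be the bookkeeping of this residual: reindexing $j = \ell - k$ converts the constraint $k \le \ell \le \Lmax$ into $k \le \Lmax - j$, $0 \le j \le \Lmax$, which rewrites the inner sum as $\sum_v \r^{(j)}(v) \sum_{k=0}^{\Lmax - j} \frac{p^{(k)}(\cdot,v)}{d(v)}$. After renaming the dummy variables, these four terms match precisely the four cross-products in the definition of $T(s,t)$, with the correct signs ($+\r_s p^{(\cdot)}(s,\cdot)$, $-\r_s p^{(\cdot)}(t,\cdot)$, $-\r_t p^{(\cdot)}(s,\cdot)$, $+\r_t p^{(\cdot)}(t,\cdot)$). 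Combining this identity with $\E{\hat{T}(s,t)} = T(s,t)$ and the form of $\hat{R}_{\Lmax}(s,t)$ in Equation~\eqref{eq:L-truncated-estimator} yields $\E{\hat{R}_{\Lmax}(s,t)} = R_{\Lmax}(s,t)$.
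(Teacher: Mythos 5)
Your proposal is correct and follows essentially the same route as the paper: linearity of expectation over the i.i.d.\ walk indicators gives $\E{\hat{T}(s,t)} = T(s,t)$, and the deterministic identity $R_{\Lmax}(s,t) = \sum_{\ell=0}^{\Lmax}\bigl(\tfrac{\q_s^{(\ell)}(s)}{d(s)} - \tfrac{\q_s^{(\ell)}(t)}{d(t)}\bigr) + \sum_{\ell=0}^{\Lmax}\bigl(\tfrac{\q_t^{(\ell)}(t)}{d(t)} - \tfrac{\q_t^{(\ell)}(s)}{d(s)}\bigr) + T(s,t)$ then yields the second claim. The only difference is one of completeness in your favor: the paper merely asserts this identity when introducing Equation~\eqref{eq:L-truncated-estimator}, whereas your expansion via Lemma~\ref{lem:invariant}, the symmetry swap from Lemma~\ref{lem:symmetry}, and the reindexing $j = \ell - k$ over the triangle $0 \le k \le \ell \le \Lmax$ correctly supplies the bookkeeping the paper leaves implicit.
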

\begin{proof}
    According to the definition of $\hat{T}$, we have:
    \begin{align*}
        \E{\hat{T}(s, t)} = \frac{1}{N}\sum_{i = 1}^N \E{T_i(s, t)} = T(s, t).
    \end{align*}
    And for $\hat{R}_{\Lmax}(s, t)$, we have 
    \begin{equation*}
        \begin{aligned}
            & \E{\hat{R}_{\Lmax}(s, t)} \\
            = \; & \E{\hat{T}(s, t) + \sum_{\ell = 0}^{\Lmax}\left(\dfrac{\bs{q}_s^{(\ell)}(s)}{d(s)} - \dfrac{\bs{q}_s^{(\ell)}(t)}{d(t)}\right) + \sum_{\ell = 0}^{\Lmax}\left(\dfrac{\bs{q}_t^{(\ell)}(t)}{d(t)} - \dfrac{\bs{q}_t^{(\ell)}(s)}{d(s)}\right)}.
        \end{aligned}
    \end{equation*}

    The conclusion follows if we expand the right side using the linearity of the expectation and then apply $\E{\hat{T}(s, t)} = T(s, t)$.
\end{proof}

Before we continue to the correctness of the approximation, we need the following two lemmas: Hoeffding's inequality and Bernstein's inequality for empirical variance.
\begin{lemma}[Hoeffding's Inequality, \cite{hoeffding1994probability}]
    Let $Z_1, Z_2, \cdots, Z_n$ be independent random variables bounded in the interval $[a, b]$ with length $B = b - a$ for all $1 \leq i \leq n$. 
    Then the following inequality holds:
    $$\Pr{\left|\frac{1}{n}\sum_{i = 1}^n Z_i - \frac{1}{n}\sum_{i = 1}^{n}\E{Z_i}\right| \ge \epsilon} \le 2\exp\left(-\frac{2n\epsilon^2}{B^2}\right).$$
\end{lemma}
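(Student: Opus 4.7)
The plan is to prove Hoeffding's inequality by the classical Chernoff--moment-generating-function (MGF) route, which reduces a tail bound on a sum of bounded independent random variables to a single one-variable MGF estimate. I would first center the variables by letting $Y_i = Z_i - \E{Z_i}$, so that each $Y_i$ is mean-zero and still lies in an interval of length $B = b - a$; writing $S_n = \sum_{i=1}^n Y_i$, it suffices to prove $\Pr{S_n \ge n\epsilon} \le \exp(-2n\epsilon^2/B^2)$ and then apply the same bound to $-S_n$, producing the factor of $2$ by a union bound.

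The core step is to establish the subGaussian MGF bound (Hoeffding's lemma): for any mean-zero random variable $Y$ taking values in an interval of length $B$ and any $\lambda \in \mathbb{R}$,
\begin{equation*}
    \E{e^{\lambda Y}} \le \exp\!\Bigl(\tfrac{\lambda^2 B^2}{8}\Bigr).
\end{equation*}
To prove this I would exploit convexity of the exponential to upper bound $e^{\lambda y}$ by the affine interpolant between the endpoints of the containing interval, take expectations (using $\E{Y}=0$), and then analyze $\phi(\lambda) := \log \E{e^{\lambda Y}}$ by checking that $\phi(0) = \phi'(0) = 0$ and $\phi''(\lambda) \le B^2/4$---the second inequality following from Popoviciu's variance bound for a random variable supported in an interval of length $B$. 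A second-order Taylor remainder then delivers $\phi(\lambda) \le \lambda^2 B^2 / 8$.

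With the one-variable bound in hand, the tail estimate follows from the standard Chernoff trick. For any $\lambda > 0$, Markov's inequality, independence (to factorize the MGF of the sum), and Hoeffding's lemma combine to give
\begin{equation*}
    \Pr{S_n \ge n\epsilon} \le e^{-\lambda n \epsilon}\,\E{e^{\lambda S_n}} = e^{-\lambda n \epsilon}\!\prod_{i=1}^n \E{e^{\lambda Y_i}} \le \exp\!\bigl(-\lambda n \epsilon + n \lambda^2 B^2 / 8 \bigr).
\end{equation*}
Optimizing over $\lambda > 0$ gives $\lambda^\star = 4 \epsilon / B^2$ and the sharp rate $\exp(-2n\epsilon^2 / B^2)$; repeating the argument with $-Y_i$ in place of $Y_i$ and taking a union bound yields the two-sided inequality in the statement.

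The main obstacle is genuinely only the proof of the single-variable MGF bound---once the constant $B^2/8$ is pinned down, the rest is mechanical. An alternative route is to symmetrize $Y$ by replacing it with $Y - Y'$ for an independent copy (bypassing the convexity estimate) at the cost of a factor-of-two loss in range; the convexity/Taylor path is slightly more delicate but delivers the sharp constant directly.
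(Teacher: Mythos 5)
Your argument is correct, but note that the paper offers no proof of this lemma to compare against: it is imported verbatim as a classical result with a citation to Hoeffding and used as a black box in Appendix A to set the number $N$ of sampled random walks. Your Chernoff/MGF route — centering to $Y_i = Z_i - \E{Z_i}$, Hoeffding's lemma $\E{e^{\lambda Y}} \le \exp(\lambda^2 B^2/8)$ via convexity plus a second-order Taylor bound on the log-MGF, factorization by independence, optimization at $\lambda^\star = 4\epsilon/B^2$ yielding the rate $\exp(-2n\epsilon^2/B^2)$, and a union bound for the two-sided form — is the standard complete proof, and all the constants check out. The only slight imprecision is that after applying the convexity/affine-interpolant step, the function whose second derivative you bound by $B^2/4$ is the logarithm of that two-point upper bound (a Bernoulli variance, $\le 1/4$ after rescaling) rather than $\log \E{e^{\lambda Y}}$ itself; alternatively, one can skip the convexity step and bound $\phi''$ directly as a tilted-measure variance via Popoviciu — either single route suffices, and both give the sharp constant.
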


\begin{lemma}[Bernstein's Inequality for Empirical Variance \cite{audibert2007tuning, mnih2008empirical}]
\label{lem:bernstein}
   Let $Z_1, Z_2, \cdots, Z_n$ be independent and identically distributed random variables bounded in the interval $[a, b]$ with length $B = b - a$ for all $1 \leq i \leq n$, and let $\mu = \E{Z_i}$ be their common expectation. 
   Consider the empirical expectation $\bar{Z}$ and empirical variance $\bar{\sigma}^2$ defined respectively by $\bar{Z} = \frac{1}{n}\sum\limits_{i = 1}^n Z_i$ and $\bar{\sigma}^2 = \frac{1}{n}\sum\limits_{i = 1}^n (Z_i - \bar{Z})^2$. 
   Then we have
   $$\Pr{\left|\bar{Z} - \E{Z}\right|\le \sqrt{\frac{2\bar{\sigma}^2\log(3/p_f)}{n}} + \frac{3B\log(3/p_f)}{n}} \ge 1 - p_f.$$
\end{lemma}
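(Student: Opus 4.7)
The plan is to derive this empirical-variance Bernstein inequality by combining the classical (non-empirical) Bernstein inequality, which gives a tail bound in terms of the true variance $\sigma^2 = \mathbf{Var}[Z_1]$, with a separate concentration inequality that bounds $\sigma$ in terms of the empirical standard deviation $\bar\sigma$. A union bound, splitting the total failure probability $p_f$ into two or three pieces, then yields a bound purely in terms of quantities computable from the sample.

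\textbf{Step 1 (Classical Bernstein).} First I would invoke the standard two-sided Bernstein inequality for bounded i.i.d.\ variables: with probability at least $1 - 2p_f/3$,
\begin{equation*}
\left|\bar Z - \mu\right| \le \sqrt{\frac{2\sigma^2 \log(3/p_f)}{n}} + \frac{B\log(3/p_f)}{3n}.
\end{equation*}
This is the classical form; its proof proceeds through the standard moment generating function bound $\mathbf{E}[e^{\lambda(Z_i - \mu)}] \le \exp(\lambda^2 \sigma^2 / (2(1-\lambda B/3)))$ plus the Chernoff technique, which I would either cite or sketch briefly.

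\textbf{Step 2 (Concentration of the empirical standard deviation).} The non-trivial ingredient is to show that, with probability at least $1 - p_f/3$,
\begin{equation*}
\sigma \le \bar\sigma + B\sqrt{\frac{2\log(3/p_f)}{n}}.
\end{equation*}
I would obtain this via a bounded-differences argument: writing $\bar\sigma$ as a function $f(Z_1,\dots,Z_n)$, one checks that swapping a single coordinate changes $f$ by at most $B/\sqrt{n}$ (using the sub-additivity of the square root and the fact that each $(Z_i - \bar Z)^2$ lies in $[0, B^2]$). McDiarmid's inequality then gives the one-sided deviation bound for $\mathbf{E}[\bar\sigma] - \bar\sigma$, and Jensen's inequality provides $\mathbf{E}[\bar\sigma] \le \sigma$, which is what is needed to upper bound $\sigma$ by $\bar\sigma$ plus a correction.

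\textbf{Step 3 (Combine).} Conditioning on both events (which jointly hold with probability at least $1 - p_f$ by the union bound), I would plug the Step 2 bound on $\sigma$ into Step 1:
\begin{equation*}
\sqrt{\frac{2\sigma^2 \log(3/p_f)}{n}} \le \sqrt{\frac{2\bar\sigma^2 \log(3/p_f)}{n}} + B\sqrt{\frac{2\log(3/p_f)}{n}} \cdot \sqrt{\frac{2\log(3/p_f)}{n}} = \sqrt{\frac{2\bar\sigma^2\log(3/p_f)}{n}} + \frac{2B\log(3/p_f)}{n},
\end{equation*}
using $\sqrt{a+b}\le\sqrt{a}+\sqrt{b}$. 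Adding the residual $B\log(3/p_f)/(3n)$ from Step 1 gives a combined additive correction of at most $(2 + 1/3)B\log(3/p_f)/n \le 3B\log(3/p_f)/n$, exactly matching the stated bound.

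\textbf{Main obstacle.} The technical heart is Step 2, i.e.\ establishing a one-sided concentration bound for $\bar\sigma$ with the right constants. Using McDiarmid directly on $\bar\sigma^2$ gives a bound in terms of $\sigma^2 - \bar\sigma^2$, which when square-rooted introduces extra factors; the cleaner route is to apply McDiarmid to $\bar\sigma$ itself, for which verifying the $B/\sqrt{n}$ bounded-difference constant requires a short but careful computation. Alternatively, one can follow Maurer--Pontil's self-bounding approach, but I expect the McDiarmid route to be the most direct and to match the constants in the target inequality after the final algebraic simplification in Step 3.
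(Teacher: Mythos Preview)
The paper does not prove this lemma at all; it is quoted from the cited references and used as a black box in the early-stopping rule of the adaptive Monte Carlo phase. So there is no in-paper argument to compare against.

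That said, your Step~2 has a genuine logical gap. You want $\sigma \le \bar\sigma + B\sqrt{2\log(3/p_f)/n}$, and you propose to get it from McDiarmid (which controls $\E{\bar\sigma}-\bar\sigma$) together with Jensen's inequality $\E{\bar\sigma}\le\sigma$. But Jensen points the wrong way: it tells you $\sigma\ge\E{\bar\sigma}$, so from $\E{\bar\sigma}-\bar\sigma\le t$ you can conclude only $\bar\sigma\ge\E{\bar\sigma}-t$, which says nothing about $\sigma-\bar\sigma$. The gap $\sigma-\E{\bar\sigma}\ge 0$ is not bounded anywhere in your argument, and it is precisely the quantity you must control. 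The combination ``McDiarmid on $\bar\sigma$ plus Jensen'' would establish the \emph{opposite} inequality $\bar\sigma\le\sigma + t$ (useful for showing the empirical variance does not overestimate), not the direction you need here.

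The standard fixes are different in character from what you sketch. Maurer and Pontil show that a rescaling of $n\bar\sigma^2$ is a \emph{self-bounding} function and apply the Boucheron--Lugosi--Massart lower-tail inequality for such functions, which directly yields $\sigma\le\bar\sigma+O\bigl(B\sqrt{\log(1/\delta)/n}\bigr)$ with the right constants. The Audibert--Munos--Szepesv\'ari route instead applies a Bernstein-type bound to the random variables $(Z_i-\mu)^2$ (or works via a peeling argument), again obtaining the needed one-sided control on $\sigma$. Either approach would repair Step~2; your Step~3 algebra is then fine.
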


We also need the following bound for the sum of all residues in our proof of correctness of approximation.
\begin{lemma}[Bound of Sum of Residues]
\label{lem:sum-of-residues}
    The following bounds for the sum of residues holds in Algorithm \ref{alg:BiSPER}:
    $$\sum\limits_{v\in V}\sum\limits_{k = 0}^{\ell}\r^{(k)}(v) = 1 - \sum_{v\in V}\q^{(\ell)}(v).$$
    Hereafter, we may omit the subscript when the result applies to both $\bs{r}_s^{(k)}(v)$ and $\bs{r}_t^{(k)}(v)$, or both $\bs{q}_s^{(k)}(v)$ and $\bs{q}_t^{(k)}(v)$, for simplicity.
\end{lemma}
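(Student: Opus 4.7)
The plan is to prove the claim by induction on the number of BiSPER-Forward-Push operations performed in Algorithm \ref{alg:BiSPER}, establishing in fact the stronger mass-conservation invariant
$$\sum_{v \in V} \bs{q}^{(\ell)}(v) + \sum_{v \in V}\sum_{k = 0}^{\ell} \bs{r}^{(k)}(v) = 1$$
for every $\ell \in \{0, 1, \ldots, \Lmax\}$, applied separately to the $s$-centered and $t$-centered vectors. Rearranging this invariant gives exactly the statement of Lemma \ref{lem:sum-of-residues}. The base case is immediate from the initialization in Algorithm \ref{alg:BiSPER}: one has $\bs{r}^{(0)} = \bs{e}_s$ (respectively $\bs{e}_t$), all other residue layers are zero, and all reserve vectors are zero, so the left-hand side equals $1$.

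For the inductive step, consider a single call to BiSPER-Forward-Push$(u, \ell')$ (Algorithm \ref{alg:BiSPER-Forward-Push}) and split into three cases based on the relationship between $\ell'$ and the target level $\ell$. When $\ell' < \ell$, the operation increases $\bs{q}^{(\ell')}(u)$ by $\bs{r}^{(\ell')}(u)$, zeroes $\bs{r}^{(\ell')}(u)$, and distributes total mass $\bs{r}^{(\ell')}(u)$ into the next layer $\bs{r}^{(\ell'+1)}$. Since $\ell' + 1 \le \ell$, the residues remain in the tracked window $[0,\ell]$, so $\sum_v \sum_{k=0}^\ell \bs{r}^{(k)}(v)$ is unchanged; and $\bs{q}^{(\ell)}$ is untouched. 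When $\ell' = \ell$, the pushed mass $\bs{r}^{(\ell)}(u)$ leaves the tracked residue range because it is redistributed into layer $\ell+1 > \ell$, yet exactly the same quantity is added to $\bs{q}^{(\ell)}(u)$; the two summations on the left-hand side therefore change by equal and opposite amounts, preserving the invariant. When $\ell' > \ell$, both the modified residues (at layers $\ell'$ and $\ell'+1$) and the modified reserve (at layer $\ell'$) lie outside the tracked quantities, so nothing changes.

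Since the invariant holds at initialization and is preserved by every individual push, it holds at every point during the execution of Algorithm \ref{alg:BiSPER}; in particular, it holds at the moment the bound is needed in the proof of Theorem \ref{thm:correctness}. The argument is symmetric in $s$ and $t$, so it yields the claim for both $\bs{r}_s, \bs{q}_s$ and $\bs{r}_t, \bs{q}_t$. There is no real obstacle here; the only subtlety is the boundary case $\ell' = \ell$, where one must correctly observe that mass genuinely escaping the window $[0, \ell]$ via $\bs{r}^{(\ell+1)}$ is precisely compensated by the simultaneous deposit into $\bs{q}^{(\ell)}(u)$.
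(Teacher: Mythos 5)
Your proof is correct, but it takes a genuinely different route from the paper's. The paper derives the identity in one line from the already-established invariant of Lemma~\ref{lem:invariant}: starting from $1 = \sum_{v\in V}p^{(\ell)}(s,v)$, it substitutes the invariant for $p^{(\ell)}(s,v)$, uses $\sum_{v\in V}p^{(k)}(u,v)=1$ to collapse the inner sum, and obtains $1 = \sum_{v}\bs{q}_s^{(\ell)}(v) + \sum_{k=0}^{\ell}\sum_{v}\bs{r}_s^{(\ell-k)}(v)$, which rearranges to the claim. You instead prove the mass-conservation identity from scratch by induction on the sequence of push operations, with a clean three-way case split on the pushed level $\ell'$ relative to the tracked window $[0,\ell]$; your handling of the boundary case $\ell'=\ell$, where mass exiting the window via $\bs{r}^{(\ell+1)}$ is exactly offset by the deposit into $\bs{q}^{(\ell)}(u)$, is the key step and is handled correctly. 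What each approach buys: the paper's argument is shorter because it piggybacks on Lemma~\ref{lem:invariant}, but it implicitly relies on that lemma holding after an arbitrary sequence of pushes; your argument is self-contained, purely combinatorial (no appeal to transition probabilities), and makes explicit that the identity holds at every intermediate point of the algorithm's execution, not just at the end of the push phase. Both are valid; yours is arguably the more elementary and robust derivation, at the cost of a longer case analysis.
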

\begin{proof}
    According to the property of transition probability, we have $1 = \sum\limits_{v\in V}p^{(\ell)}(s, v)$.
    Substituting $p^{(\ell)}(s, v)$ with the invariant in Lemma \ref{lem:invariant}, we obtain:
    \begin{align*}
        1 & = \sum_{v\in V}\q_s^{(\ell)}(v) + \sum_{k = 0}^{\ell}\sum_{u\in V}\r_s^{(\ell - k)}(u)\sum_{v\in V}p^{(k)}(u, v) \\
        & = \sum_{v\in V}\q_s^{(\ell)}(v) + \sum_{k = 0}^{\ell}\sum_{v\in V}\r_s^{(\ell - k)}(v).
    \end{align*}
    Moving the first term to the left side of the equation yields the desired conclusion. 
    A similar conclusion holds for node $t$.
\end{proof}

With all the lemmas in place, we can proceed to prove the correctness of approximation provided by our BiSPER algorithm.
\begin{proof}
    We will abbreviate $\Lmax$ to $L$ for simplicity in our proof.

    The randomness of the estimator $\hat{R}_{L}(s, t)$ arises from the $\hat{T}(s, t)$ part.
    We first bound the number of sampled random walks ($N$ in Algorithm \ref{alg:BiSPER}) using Hoeffding's inequality.
    We will do this in the following two ways:
    \begin{enumerate}[leftmargin=*]
        \item On one hand, note that after the push phase, for any node $v$, $0\le \frac{\r^{(k)}(v)}{d(v)} \le r_{\max}$ holds. 
        Thus, we have 
        $0\le \sum\limits_{k = 0}^{L - \ell}\frac{\r^{(k)}(v)}{d(v)} \le (L + 1 - \ell)r_{\max}$.
        Therefore, if we denote $T_{B_1} = (L + 1)(L + 2)r_{\max}$, we can conclude that both the first term and second term in $\hat{T}_i$ range in $\left[-\frac{T_{B_1}}{2}, \frac{T_{B_1}}{2}\right]$, and thus $\hat{T}_i \in [-T_{B_1}, T_{B_1}]$.
        According to Hoeffding's inequality, when we set $N = \left\lceil\frac{2T_{B_1}^2\log(2 / p_f)}{\epsilon^2}\right\rceil$, we have the conclusion.

        \item On the other hand, according to Lemma \ref{lem:sum-of-residues}, we have $0\le \sum\limits_{k = 0}^{L - \ell}\sum\limits_{v\in V}\frac{\r^{(k)}(v)}{d(v)}\le \sum\limits_{k = 0}^{L - \ell}\sum\limits_{v\in V}{\r^{(k)}(v)} = 1 - \sum\limits_{v\in V}\q^{(L -\ell)}(v)$.
        Therefore, it leads to
        \begin{align*}
            \hat{T}_i(s, t) 
            & \le \sum_{\ell = 0}^{L}\sum_{k = 0}^{L - \ell}\sum_{v\in V}\left(\dfrac{\bs{r}_s^{(k)}(v)}{d(v)} + \dfrac{\bs{r}_t^{(k)}(v)}{d(v)}\right) \\
            & \le 2(L + 1) - \sum_{\ell = 0}^{L}\sum_{v\in V}\left(\q_s^{(\ell)}(v) + \q_t^{(\ell)}(v)\right)\eqqcolon T_{B_2},
        \end{align*}
        and analogously, $\hat{T}_i(s, t) \ge -T_{B_2}$, where $T_{B_2} \coloneqq 2(L + 1) - \sum\limits_{\ell = 0}^{L}\sum\limits_{v\in V}\left(\q_s^{(\ell)}(v) + \q_t^{(\ell)}(v)\right)$.
        Using Hoeffding's inequality again, when we set $N = \left\lceil \frac{2T_{B_2}^2\log(2/p_f)}{\epsilon^2}\right\rceil$, we have the conclusion.
    \end{enumerate}

    Another special case occurs when $r_{\max}\ge \frac{1}{d}$, where \\ $d = \min\{d(s), d(t)\}$. 
    In this scenario, no push operation is performed, and the only two non-zero residue values are $\bs{r}_s^{(0)}(s)$ and $\bs{r}_t^{(0)}(t)$. Consequently, we have $0 \le \sum_{k = 0}^{L - \ell} \frac{\r_s^{(k)}(v)}{d(v)} \le \frac{1}{d}$ and $0 \le \sum_{k = 0}^{L - \ell} \frac{\r_t^{(k)}(v)}{d(v)} \le \frac{1}{d}$. 
    This implies $\hat{T}_i \in \left[-\frac{2(L + 1)}{d}, \frac{2(L + 1)}{d}\right]$. 
    Once again, utilizing Hoeffding's inequality, we conclude that when $r_{\max} \ge \frac{1}{d}$, it suffices to set the number of random walks to be $N = \left\lceil\frac{8(L + 1)^2\log(2/p_f)}{\epsilon^2 d^2}\right\rceil$.

    We observe that the upper bound $N$ on the number of random walks to be sampled often significantly exceeds what is actually required. 
    To address this, we incorporate Bernstein's inequality (Lemma \ref{lem:bernstein}) in Lines 5–8 of Algorithm \ref{alg:BiSPER-AMC} for early stopping, following the approach proposed by~\cite{yang2023efficient}.  
\end{proof}

\section{Proof of Theorem \ref{thm:complexity}}
\label{app:proof-complexity}
Theorem \ref{thm:complexity} characterizes the time complexity of our BiSPER algorithm.
Before we analyze the time complexity of the BiSPER algorithm, we will prove the following lemma.
\begin{lemma}[Upper Bound of $\r_s^{(\ell)}(u)$]
\label{lem:upper-bound-r}
    For any $0\le \ell \le \Lmax$, the inequalities $\r_s^{(\ell)}(u) \le p^{(\ell)}(s, u)$ and $\r_t^{(\ell)}(u) \le p^{(\ell)}(t, u)$ hold.
\end{lemma}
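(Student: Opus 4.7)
The plan is to extract the desired inequality directly from the invariant established in Lemma \ref{lem:invariant}. Specifically, applying that invariant with target node $u$ instead of $t$, I would write
\[
p^{(\ell)}(s, u) \;=\; \q_s^{(\ell)}(u) \;+\; \sum_{k=0}^{\ell}\sum_{v\in V} \r_s^{(\ell-k)}(v)\, p^{(k)}(v, u).
\]
Next I would isolate a single distinguished term on the right-hand side, namely the $k=0$, $v=u$ contribution. Since $p^{(0)}(u,u) = 1$, this term equals precisely $\r_s^{(\ell)}(u)$, so the identity becomes
\[
p^{(\ell)}(s, u) \;=\; \q_s^{(\ell)}(u) \;+\; \r_s^{(\ell)}(u) \;+\; \sum_{\substack{k,v \\ (k,v)\neq (0,u)}} \r_s^{(\ell-k)}(v)\, p^{(k)}(v, u).
\]

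The remaining step is a non-negativity argument. I would observe, by a short induction on the sequence of BiSPER-Forward-Push operations, that every entry $\q_s^{(\ell)}(u)$ and $\r_s^{(\ell)}(u)$ stays non-negative throughout Algorithm \ref{alg:BiSPER}: the initial residues $\bs{e}_s$ and reserves $\bs 0$ are non-negative, and BiSPER-Forward-Push$_s(u,\ell)$ is only invoked when $\r_s^{(\ell)}(u)/d(u) > r_{\max} \ge 0$, at which point it adds the non-negative quantity $\r_s^{(\ell)}(u)$ to a reserve, distributes the non-negative quantity $\r_s^{(\ell)}(u)/d(u)$ to neighbors' residues, and zeroes out the source residue. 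Combined with $p^{(k)}(v,u) \ge 0$, every term in the displayed identity except possibly $\r_s^{(\ell)}(u)$ itself is non-negative, which immediately yields $\r_s^{(\ell)}(u) \le p^{(\ell)}(s,u)$. The bound for $t$ follows by an identical argument applied to $\r_t^{(\ell)}$.

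There is no real obstacle here; the only delicate point is making sure to correctly single out the $k = 0$, $v = u$ summand (rather than absorbing it into the double sum) so that $\r_s^{(\ell)}(u)$ appears explicitly on the right-hand side, and then justifying non-negativity of all reserves and residues rigorously rather than by appeal to intuition. Both are essentially one-line observations, so the whole proof should fit in a short paragraph.
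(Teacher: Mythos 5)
Your proposal is correct and follows essentially the same route as the paper: the paper also specializes the invariant of Lemma \ref{lem:invariant} to target $u$, keeps only the $k=0$ summand $\sum_{v}\r_s^{(\ell)}(v)p^{(0)}(v,u)=\r_s^{(\ell)}(u)$, and discards the remaining (non-negative) terms. You merely make explicit the non-negativity of reserves and residues, which the paper leaves implicit.
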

\begin{proof}
    According to Lemma \ref{lem:invariant}, we have 
    \begin{align*}
        p^{(\ell)}(s, u) \ge \sum_{v\in V}\r_s^{(\ell)}(v)p^{(0)}(v, u) = \r_s^{(\ell)}(u).
    \end{align*}
    Similar results can be derived for $\r_t^{(\ell)}(u)$.
\end{proof}
Now we can present the proof of Theorem \ref{thm:complexity}.
\begin{proof}
    We will abbreviate $\Lmax$ to $L$ for simplicity in our proof.
    
    We first bound the cost of the push phase.
    On one hand, the cost of the push phase can be bounded as
    \begin{align*}
        \mathrm{Cost}_{\text{Push}}
        & = \sum_{\ell = 0}^{L}\sum_{u\in V}\mathbf{I}\left[\frac{\tilde{\bs{r}}^{(\ell)}(u)}{d(u)} > r_{\max}\right]d(u)\log(L + 1) \\
        & \le \log(L + 1)\sum_{\ell = 0}^{L}\sum_{u\in V} \frac{\tilde{\bs{r}}^{(\ell)}(u)}{r_{\max}d(u)} d(u) \\
        & \le \frac{\log(L + 1)}{r_{\max}}\sum_{\ell = 0}^{L}\sum_{u\in V}p^{(\ell)}(s, u) = \frac{(L + 1)\log(L + 1)}{r_{\max}}.
    \end{align*}
    where $\tilde{\bs{r}}^{(\ell)}(u)$ is the residue before $u$ is pushed.
    The inequality holds because $\frac{x}{a}$ is an upper bound for $\mathbf{I}\left[x \ge a\right]$, and $p^{(\ell)}(s, u)$ is an upper bound for $\tilde{\bs{r}}^{(\ell)}(u)$ (Lemma \ref{lem:upper-bound-r}).
    On the other hand, the cost can also be bounded by
    \begin{align*}
        \mathrm{Cost}_{\text{Push}}
        & = \sum_{\ell = 0}^{L}\sum_{u\in V}\mathbf{I}\left[\frac{\tilde{\bs{r}}^{(\ell)}(u)}{d(u)} \ge r_{\max}\right]d(u)\log(L + 1) \\
        & \le \log(L + 1)\sum_{\ell = 0}^{L}\sum_{u\in V}d(u) = 2m(L + 1)\log(L + 1).
    \end{align*}
    The inequality holds because $1$ is a naive upper bound for $\mathbf{I}\left[x \ge a\right]$.
    We also note that no push operation will be performed when $r_{\max} \ge \frac{1}{d} = \frac{1}{\min\{d(s), d(t)\}}$.
    Therefore, the push cost can be summarized as
    \begin{align*}
        \mathrm{Cost}_{\text{Push}}
        & = 
        \begin{cases}
            0, & r_{\max} \ge \frac{1}{d} \\
            \dfrac{(L + 1)\log(L + 1)}{r_{\max}}, & \frac{1}{2m} \le r_{\max} < \frac{1}{d} \\
            2m(L + 1)\log(L + 1), & r_{\max} < \frac{1}{2m}
        \end{cases}.
    \end{align*}
    
    Meanwhile, the cost of the adaptive Monte Carlo phase can be easily bounded by
    \begin{equation*}
        \begin{aligned}
            \mathrm{Cost}_{\text{AMC}}
            & = 
            \begin{cases}
                \dfrac{8(L + 1)^3\log(L + 1)\log(2 / p_f)}{\epsilon^2 d^2}, & r_{\max} \ge \frac{1}{d} \\
                c \coloneqq \dfrac{2(L + 1)^3(L + 2)^2r_{\max}^2\log(L + 1)\log(2 / p_f)}{\epsilon^2}, & r_{\max} < \frac{1}{d}
            \end{cases}.
        \end{aligned}
    \end{equation*}
    Considering all combinations, we get the total cost
    \begin{align*}
        \mathrm{Cost}_{\text{Total}} = 
        \begin{cases}
            \dfrac{8(L + 1)^3\log(L + 1)\log(2 / p_f)}{\epsilon^2 d^2}, & r_{\max} \ge \frac{1}{d} \\
            \dfrac{(L + 1)\log(L + 1)}{r_{\max}} + c, & \frac{1}{2m} \le r_{\max} < \frac{1}{d} \\
            2m(L + 1)\log(L + 1) + c, & r_{\max} < \frac{1}{2m}
        \end{cases}.
    \end{align*}

    Then we need to set the threshold $r_{\max}$ to minimize the total cost.
    In the first case, the total cost is irrelevant to $r_{\max}$. 
    In the second case, the total cost can be expressed as $\frac{A}{r_{\max}} + B r_{\max}^2$. 
    To minimize the upper bound of the total cost, we set $r_{\max} = (\frac{A}{2B})^{1 / 3} = \frac{\epsilon^{2/3}}{2^{2/3}(L + 1)^{2 / 3}(L + 2)^{2 / 3}\log^{1 / 3}(2 / p_f)} = \tilde{O}\left(\frac{\epsilon^{2/3}}{L^{4/3}}\right)$. 
    By doing so, the minimized upper bound of the total cost is $\frac{3}{2^{2 / 3}}\cdot A^{2/3}B^{1/3} = \tilde{O}\left(\frac{L^{7/3}}{\epsilon^{2/3}}\right)$.
    In the third case, the cost is minimized to $2m(L + 1)\log(L + 1) = \tilde{O}(mL)$ if we set $r_{\max} = 0$.
    Therefore, we have the following observations:
    \begin{enumerate}[leftmargin=*]
        \item When the maximum random walk length $L$ is large enough, or $L \ge \max\left\{\frac{m^{1/2}\epsilon d}{2\log^{1/2}(2 / p_f)}, \frac{2m^{3/4}\epsilon^{1/2}}{3^{3/4}\log^{1/4}(2 / p_f)}\right\}$, the third bound of the total cost is the minimum, we can set $r_{\max} = 0$.
        \item When $d = \min\{d(s), d(t)\}$ is large enough, or \\ $d \ge \max\left\{\frac{2^{5 / 3}(L + 1)^{1/3}\log^{1 / 3}(2 / p_f)}{3^{1/2}\epsilon^{2/3}}, \frac{2(L + 1)\log^{1 / 2}(2 / p_f)}{m^{1/2}\epsilon}\right\}$, the first bound of the total cost is the minimum, we can set $r_{\max} = \frac{1}{d}$.
        \item In other cases, we set $r_{\max} = \frac{\epsilon^{2 / 3}}{2^{2 / 3}(L + 1)^{2 / 3}(L + 2)^{2 / 3}\log^{1 / 3}(2 / p_f)} = \tilde{O}\left(\frac{\epsilon^{2 / 3}}{L^{4 / 3}}\right)$.
    \end{enumerate}
    With the above setting of $r_{\max}$, we can get the bound of the total cost $\tilde{O}\left(\min\left\{\frac{L^3}{\epsilon^2 d^2}, \frac{L^{7/3}}{\epsilon^{2/3}}, mL\right\}\right)$.
\end{proof}

\end{document}